\documentclass[11pt]{article}
\usepackage{fullpage}
\usepackage{subfig}
\usepackage[usenames,dvipsnames]{xcolor}
\usepackage[colorlinks,citecolor=blue,linkcolor=BrickRed]{hyperref}
\usepackage{times}
	\usepackage{latexsym,graphicx,epsfig,color}
\usepackage{amsfonts,amssymb,amsmath,amsthm,amstext}
\usepackage{enumitem}
\usepackage{url,setspace}
\usepackage{multirow}
\usepackage{rotating}
\usepackage{makeidx}
\usepackage{tikz}
\usepackage{enumitem}
\usepackage{accents}
\usepackage{xspace}
\usepackage{algorithm,algpseudocode}
\usepackage{bm}

\newcommand{\IGNORE}[1]{}

\usetikzlibrary{decorations.markings}
\usetikzlibrary{arrows}
\tikzstyle{block}=[draw opacity=0.7,line width=1.4cm]
\tikzstyle{graphnode}=[circle, draw, fill=black!20, inner sep=0pt, minimum width=6pt]
\tikzstyle{point}=[circle, draw, fill=black!30, inner sep=0pt, minimum width=1pt]
\tikzstyle{input}=[rectangle, draw, fill=black!75,inner sep=3pt, inner ysep=3pt, minimum width=4pt]
\tikzstyle{unmatched}=[graphnode,fill=black!0]
\tikzstyle{shaded}=[graphnode,fill=black!20]
\tikzstyle{matched}=[graphnode,fill=black!100]  	
\tikzstyle{matching} = [ultra thick]
\tikzset{
    >=stealth',
    pil/.style={
           ->,
           thick,
           shorten <=2pt,
           shorten >=2pt,}
}
\tikzset{->-/.style={decoration={
  markings,
  mark=at position .5 with {\arrow{>}}},postaction={decorate}}}

\makeindex

\DeclareMathOperator{\argmax}{arg max}

\setlength{\parskip}{0.15cm}       
\setlength{\parindent}{.5cm}        
\newtheorem{theorem}{Theorem}[section]
\newtheorem{claim}[theorem]{Claim}

\newtheorem{lemma}[theorem]{Lemma}

\newtheorem{observation}[theorem]{Observation}
\newtheorem{assumption}[theorem]{Assumption}

\newtheorem{defn}[theorem]{Definition}

\usepackage{thmtools,thm-restate} 




\newcommand{\E}{\mathbb{E}}



\renewcommand{\S}{\ensuremath{\mathcal{S}}\xspace}

\newcommand{\F}{\mathcal{F}}
\newcommand{\M}{\mathcal{M}}
\newcommand{\T}{\mathcal{T}}

\def\C   {\mathcal{C}}







\newif\ifFULL
\FULLtrue

\newcounter{note}[section]

\newcommand{\price}{\pi}
\newcommand{\probed}{\mathsf{Probed}}
\newcommand{\val}{\mathsf{val}}
\newcommand{\cost}{\mathsf{cost}}
\newcommand{\I}{\mathbb{I}}

\newcommand{\X}{\mathbf{X}}
\newcommand{\Y}{\mathbf{Y}}
\newcommand{\cc}{\mathbf{c}}
\newcommand{\R}{\mathbb{R}_{\geq 0}}
\newcommand{\J}{\mathcal{J}}
\newcommand{\A}{\ensuremath{\mathcal{A}}\xspace}

\newcommand{\frugal}{\text{\sc{Frugal}}\xspace}
\newcommand{\POI}{PoI\xspace}
\newcommand{\FreeInfo}{Free-Info\xspace}
\newcommand{\taumax}{\tau^{\max}}
\newcommand{\taumin}{\tau^{\min}}
\newcommand{\Ym}{\Y_{M}}
\newcommand{\Ymax}{Y^{\max}}
\newcommand{\bYmax}{\mathbf{Y^{\max}}}
\newcommand{\Ymmax}{\Ym^{\max}}
\newcommand{\bYmin}{\mathbf{Y^{\min}}}
\newcommand{\Ymin}{Y^{\min}}
\newcommand{\Clients}{\text{\sc{Clients}}}
\newcommand{\leaf}{l}
\newcommand{\one}{\textsf{1}\xspace}


\title{ 
The Price of Information in\\
Combinatorial Optimization\footnote{Part of this work was done while the authors were visiting the Simons Institute for the Theory of Computing.
We are grateful to Anupam Gupta and Viswanath Nagarajan for several discussions on this project.}}
	\author{
Sahil Singla
\thanks{Computer Science Department, Carnegie Mellon
     University, Pittsburgh, PA 15213, USA. Email: \texttt{ssingla@cmu.edu}. Research  supported in part by a CMU Presidential Fellowship and
     NSF awards CCF-1319811, CCF-1536002, CCF-1540541, and CCF-1617790.}
}

\date{\today }

\begin{document}
\maketitle
\thispagestyle{empty}

\begin{abstract}{

Consider a network design  application where we wish to lay down a minimum-cost spanning tree in a given graph; however, we only have  stochastic information about the edge costs. To learn the precise   cost of any edge, we have to conduct a study that incurs a price. Our goal is to find a spanning tree while minimizing the \emph{disutility}, which is the sum of the tree cost  and the total price that we spend on the studies. 
In a different application, each edge gives a stochastic \emph{reward value}. Our goal is to find a spanning tree while  maximizing the \emph{utility}, which is the tree reward \emph{minus} the prices that we pay. 

Situations such as the above two often arise in practice where we wish to find a good solution to an optimization problem, but we  start with only some partial knowledge about the parameters of the problem.  The missing information can be found only  after paying a \emph{probing} price, which we call the  \emph{price of information}. What strategy should we adopt to optimize our expected utility/disutility?  

A classical example of the above setting is Weitzman's ``Pandora's box'' problem where we are given probability distributions on values of $n$ independent random variables. The goal is to choose a \emph{single} variable with a large value,  but we can find the actual outcomes only after  paying a  price. Our work is a generalization of this model to other combinatorial optimization problems such as matching, set cover, facility location, and prize-collecting Steiner tree. We give a technique that reduces such problems to their non-price counterparts, and use it to design exact/approximation algorithms  to optimize our utility/disutility. Our techniques  extend to situations where there are additional constraints on what parameters can be probed or when we can simultaneously probe a subset of the parameters.


}\end{abstract}


\clearpage
\setcounter{page}{1}

\section{Introduction}\label{sec:intro}
Suppose we want to purchase a house. We have some idea about the value of every available house in the market, say based on its location, size, and photographs. However, to find the exact value of a house we have to hire a house inspector and pay her a price.
Our \emph{utility} is the difference in the value of the best house that we find and the total inspection prices that we pay. We want to design a strategy to maximize our  utility.

The above problem can be modeled as Weitzman's ``Pandora's box'' problem~\cite{Weitzman-Econ79}. Given probability distributions of $n$ independent random variables $X_i$ and given their probing prices $\pi_i$, 
 the  problem is to adaptively {probe} a subset $\probed \subseteq [n]$  to maximize the expected utility: 
 \[ \E \left[ \max_{i\in \probed} \{X_i\} - \sum_{i\in \probed} \pi_i \right].  \]
 Weitzman gave an optimal adaptive strategy that maximizes the expected utility\ifFULL ~(na\"{\i}ve greedy algorithms can behave arbitrarily bad: see  Section~\ref{sec:egNaiveGreedy}). \else. \fi
 However, suppose instead of probing values of elements, we probe weights of edges in a graph. Our utility is the maximum-weight matching that we find \emph{minus}  the total probing prices that we pay. What strategy should we adopt to maximize our expected utility?

In a different scenario, consider a network design \emph{minimization} problem. Suppose we wish to lay down a minimum-cost spanning tree in a given graph; however, we only have  stochastic information about the edge costs. To find the precise cost $X_i$ of any edge, we have to conduct a study that incurs a price $\price_i$. Our \emph{disutility} is the sum of the tree cost  and the total price that we spend on the studies. We want to design a strategy to minimize our expected disutility. An important difference between these two scenarios is that of maximizing utility vs minimizing disutility. 



Situations like the above often arise where we wish to find a ``good" solution to an optimization problem; however, we  start with only some partial knowledge about the parameters of the problem. 
 The missing information can be found only  after  paying a \emph{probing} price, which we call the  \emph{price of information}. What strategy should we adopt to optimize our expected utility/disutility?
In this work we design optimal/approximation algorithms for several  combinatorial optimization problems in an uncertain environment where we jointly optimize the value of the solution and the price of information.


\subsection{Utility/Disutility Optimization 
}\label{sec:modelResults}
To begin, the above maximum-weight matching problem can be formally modeled as follows.

\noindent \textbf{Max-Weight Matching}  ~ Given a graph $G$ with edges $E$, suppose each edge $i\in E$ takes some random  weight $X_i$ independently from a known probability distribution. 
We can find the exact outcome $X_i$ only after paying a \emph{probing price} $\price_i$. The goal is to adaptively probe a set of edges $\probed \subseteq E$ and select a matching $\I \subseteq \probed$  to  maximize the   expected \emph{utility}, 
 \[ \E \left[ \sum_{i\in \I} X_i - \sum_{i\in \probed} \price_i \right],
 \]
where the expectation is over random variables $\X = (X_1, \ldots, X_n)$ and any internal randomness of the algorithm. We observe that we can only select an edge if it has been probed and we might select only a subset of the  probed edges.  
This matching  problem can be used to model kidney exchanges where testing compatibility of donor-receiver pairs has an associated price.

To capture value functions of more general  combinatorial problems in a single  framework, we define the  notion of \emph{semiadditive} functions.
\begin{defn}[Semiadditive function] 
We say a function $f(\I,\X): 2^{V}\times \R^{|V|} \rightarrow \R$ is \emph{semiadditive} if there exists a function $ h:2^V \rightarrow \R$  such that
\[	f(\I,\X) = \sum_{i\in \I} X_i + h(\I).
\]
\end{defn}
For example, in the case of max-weight matching our value function $f(\I,\X) = \sum_{i\in \I} X_i$ is \emph{additive}, i.e. $h(\I)=0$. We call these functions semiadditive because the second term $ h(\I)$ is allowed to effect the function in a ``non-additive" way; however, not depending on $\X$. Here are some other examples. 
\begin{itemize}
\item \emph{Uncapacitated  Facility Location}: Given a graph $G=(V,E)$ with metric $(V,d)$, $\Clients \subseteq V$, and facility opening costs $\X: V\rightarrow \R$, we wish to open facilities at some locations $\I \subseteq V$. The function is the sum of facility opening costs and the connection costs to \Clients. Hence,
\begin{align}  \label{eq:facilLocation}
f(\I,\X) = \sum_{i\in \I} X_i + \sum_{j\in \Clients} \min_{i\in \I }d(j,i).
\end{align}
Here $h(\I) =\sum_{j\in \Clients} \min_{i\in \I }d(j,i)$ only depends on $\I$, and not on facility opening costs $\X$.

\item \emph{Prize-Collecting Steiner Tree}: Given a graph $G=(V,E)$ with some edge costs $\cc :E \rightarrow \R $, a root node $r\in V$, and \emph{penalties} $\X: V \rightarrow \R$. The goal is to find a tree that connects a subset of nodes to $r$, while trying to minimize the cost of the tree and the sum of the penalties of nodes $\I$ not connected to $r$.  Hence,
\[ f(\I,\X) = \sum_{i\in \I} X_i + \text{Min-Steiner-Tree}(V\setminus \I), 
\]
where $\text{Min-Steiner-Tree}(V\setminus \I)$ denotes the minimum cost tree connecting all nodes in $V\setminus \I$ to $r$.
\end{itemize}

We can now describe an abstract utility-maximization model that captures problems such as Pandora's box, max-weight matching, and max-spanning tree,  in a single unifying framework,

\noindent \textbf{Utility-Maximization}  ~  Suppose we are given a downward-closed (packing)\footnote{An independence family $\F \subseteq 2^V$ is called \emph{downward-closed} if $A\in \F$ implies $B\in \F$ for any $B\subseteq A$. A set-system is called \emph{upward-closed} if its complement is downward-closed.} constraint $\F \subseteq 2^V$ and a semiadditive function $\val$. Each element $i\in V$ takes a value $X_i$ independently from a known probability distribution. To find the  outcome $X_i$ we have to pay a known probing price $\price_i$. The goal is to adaptively probe a set of elements $\probed \subseteq V$ and select  $\I \subseteq \probed$ that is feasible (i.e., $\I \in \F$) to  maximize the   expected \emph{utility}, 
 \[ \E \left[ \val(\I,\X) - \sum_{i\in \probed} \price_i \right],
 \]
where the expectation is over random variables $\X$ and any internal randomness of the algorithm.

For example, in the max-weight matching problem $\val$ is an additive function and a subset of edges $\I$ is feasible if they form a matching. Similarly, when $\val$ is additive and $\F$ is a matroid, this framework captures max-weight matroid rank function, which contains Pandora's box and max-spanning tree as special cases.

The following is our main result for the utility-maximization problem: 

 \begin{theorem} \label{thm:utilityMax}
 For the utility-maximization problem for additive value functions and various  packing constraints $\F$,  we  obtain the following efficient algorithms.
\begin{itemize}
\item \emph{$k$-system}\footnote{An independence family $\F\subseteq 2^{V}$ is a \emph{$k$-system} if for any $Y\subseteq V$ we have $\frac{\max_{A\in \mathcal{B}(Y)}(|A|)}{\min_{A\in \mathcal{B}(Y)}(|A|)} \leq k$, where $\mathcal{B}(Y)$ denotes the set of maximal independent sets of $\F$ included in $Y$~\cite{CCPV-SICOMP11}.
These are more general than intersection of $k$ matroids: e.g., a $2$-system captures matching in general graphs and a $k$-system captures matching in a hypergraph with edges of size at most $k$.}:
For stochastic element values, we get a $k$-approximation.
\item  \emph{Knapsack}: For stochastic item values and known item sizes, we get a  $2$-approximation.
\end{itemize}
\end{theorem}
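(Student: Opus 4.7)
The plan is to reduce utility-maximization to a deterministic packing problem using Weitzman-style reservation values, then apply standard combinatorial approximation algorithms in an adaptive fashion. For each element $i \in V$, define the \emph{index} $\tau_i$ to be the unique value satisfying $\E[(X_i - \tau_i)^+] = \price_i$, and define the surrogate variable
\[
Y_i \;:=\; \min(X_i, \tau_i), \qquad \text{so that}\qquad X_i - Y_i = (X_i - \tau_i)^+.
\]
The $\tau_i$ are deterministic and computable upfront, while $Y_i$ is a bounded random variable (capped at $\tau_i$) that will act as a deterministic-looking ``adjusted weight'' for element $i$.

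The central structural step I would prove is the surrogate upper bound: for \emph{any} adaptive probing strategy producing a probed set $\probed$ and feasible selection $\I \subseteq \probed$ with $\I \in \F$,
\[
\E\!\left[\sum_{i\in \I} X_i \;-\; \sum_{i\in \probed} \price_i \right] \;\leq\; \E\!\left[\sum_{i\in \I} Y_i\right].
\]
This follows by rewriting the left-hand side as $\E[\sum_{i\in \I} Y_i] + \E[\sum_{i\in \I}(X_i - \tau_i)^+ - \sum_{i\in \probed}\price_i]$ and noting that the event $\{i \in \probed\}$ is measurable with respect to information revealed strictly before $X_i$ is sampled; by independence, $\E[(X_i - \tau_i)^+ \mid i \in \probed] = \price_i$, so $\E[\sum_{i\in\probed}(X_i - \tau_i)^+] = \E[\sum_{i\in\probed}\price_i]$. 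Since $\I \subseteq \probed$ and $(X_i-\tau_i)^+ \geq 0$, the remainder is nonpositive. Applied to the optimal strategy, this shows that the optimal expected utility is at most $\E[\max_{\I^*\in \F}\sum_{i\in \I^*} Y_i]$.

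With this benchmark in hand, the algorithm mimics an offline $\alpha$-approximation for the deterministic problem ``maximize $\sum_{i\in \I} Y_i$ subject to $\I \in \F$,'' but probes adaptively. For a $k$-system, I would run the classical Jenkyns/Korte--Hausmann greedy in decreasing order of $\tau_i$: when element $i$ is considered, skip it if $\I \cup \{i\} \notin \F$; otherwise probe it (adding $\price_i$ to the bill) and commit $i$ to $\I$ iff $X_i \geq \tau_i$. Because $Y_i \leq \tau_i$, processing by $\tau_i$ dominates processing by $Y_i$, and the identity $X_i \mathbb{1}[X_i \geq \tau_i] - \price_i \cdot \text{(conditional)} = Y_i$ plus the standard $k$-system greedy inequality pits the algorithm's expected utility against the surrogate benchmark divided by $k$. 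For knapsack with deterministic sizes, I would lift the usual $2$-approximation (better of value/size greedy and the single best item) in the same way, using $\tau_i$ as the ranking weight and probing only when greedy is about to commit.

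The main obstacle will be the accounting for elements that are \emph{probed but not selected}: the surrogate lemma charges $\price_i$ to every probed element, but the deterministic $k$-system (resp.\ knapsack) analysis only accounts for selected elements. The resolution is to ensure the algorithm probes $i$ only after it has passed the feasibility check $\I \cup \{i\} \in \F$, so that every probed element would have been selected by the offline greedy had $X_i \geq \tau_i$ occurred; then the telescoping $\E[X_i \mathbb{1}[i\in\I] - \price_i \mathbb{1}[i\in\probed}]] = \E[Y_i \mathbb{1}[i \text{ considered}]]$ holds element-wise, and the loss relative to the benchmark is precisely the deterministic approximation factor.
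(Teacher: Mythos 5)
Your surrogate upper bound is exactly right and matches the paper's Lemma~\ref{lem:boundAdapMax}: defining $\tau_i$ by $\E[(X_i-\tau_i)^+]=\price_i$ and $Y_i=\min(X_i,\tau_i)$, using measurability of $\{i\in\probed\}$ with respect to pre-probe information, and using $\I\subseteq\probed$ with $(X_i-\tau_i)^+\geq 0$. The gap is in the algorithmic simulation. Your greedy \emph{discards} a probed element $i$ whenever $X_i<\tau_i$, which means you are effectively running the $k$-system greedy on the weights $\tilde Y_i:=\tau_i\,\mathbb{1}[X_i\geq\tau_i]$, not on $Y_i=\min(X_i,\tau_i)$. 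Your claimed accounting identity is wrong: conditioned on $i$ being considered (an event independent of $X_i$), one gets
\[
\E\!\left[X_i\,\mathbb{1}[i\in\I]-\price_i\,\mathbb{1}[i\in\probed]\;\middle|\;i\text{ considered}\right]=\E\!\left[X_i\,\mathbb{1}[X_i\geq\tau_i]\right]-\price_i=\tau_i\Pr[X_i\geq\tau_i]=\E[\tilde Y_i],
\]
which is $\E[Y_i]-\E[X_i\,\mathbb{1}[X_i<\tau_i]]$, strictly less than $\E[Y_i]$ in general. Since $\tilde Y_i\leq Y_i$ pointwise, the $k$-system greedy inequality only gives you $\geq\frac{1}{k}\E[\max_{\I\in\F}\sum_{i\in\I}\tilde Y_i]$, whereas the benchmark is $\E[\max_{\I\in\F}\sum_{i\in\I}Y_i]$; the inequality goes the wrong way, and the ratio $\E[Y_i]/\E[\tilde Y_i]$ can be unbounded (e.g.\ $X_i=1$ w.p.\ $1-p$ and $X_i=100$ w.p.\ $p$ with $\tau_i=2$ gives ratio $(1+p)/(2p)$).

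The missing idea is that a probed element with $X_i<\tau_i$ must \emph{not} be discarded: it should be kept alive with its revealed surrogate weight $Y_i=X_i$ and may be selected in a later round if it becomes the highest remaining feasible weight. That is what makes the adaptive algorithm an exact coupling with the offline greedy run on the (unknown upfront) weights $Y_i$: since $Y_i\leq\tau_i$, probing $i$ when $\tau_i$ is the current maximum never misorders the greedy, and when $X_i\geq\tau_i$ the step contributes $\tau_i\Pr[X_i\geq\tau_i]$ to both worlds, while a later selection of a previously probed $i$ contributes $X_i=Y_i$ to both worlds with no additional price. This is precisely Algorithm~\ref{alg:frugalToAdaptive} and Claims~\ref{claim:sameSolution}--\ref{claim:POIutility} in the paper. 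On knapsack, a further wrinkle: ``the better of value/size greedy and the single best item'' is not a single frugal policy in the \POI world (you cannot run both, since probing prices are sunk); the paper instead runs one of the two frugal algorithms chosen uniformly at random, which preserves the $2$-approximation in expectation.
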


Some important corollaries of Theorem~\ref{thm:utilityMax} are an optimal algorithm for the max-weight matroid rank  problem\footnote{For  weighted matroid rank functions, Kleinberg et al.~\cite{KWW-EC16}  independently obtain a similar result.} and a $2$-approximation algorithm for the max-weight matching problem. Theorem~\ref{thm:utilityMax}  is particularly interesting because it gives approximation results for mixed-sign objectives, which are usually difficult to handle. 

We also show that if $\val$ is allowed to be any monotone submodular function then one \emph{cannot} obtain good approximation results: there is  an $\tilde{\Omega}(\sqrt{n})$  hardness even in a deterministic setting (see \ifFULL Section~\ref{sec:submodHardness} \else  full version\fi).

 Next, we  describe a  disutility-minimization model that captures problems like the \emph{min}-cost spanning tree.

\noindent \textbf{Disutility-Minimization }  ~
 Suppose we are given an upward-closed (covering) constraints $\F' \subseteq 2^V$ and a semiadditive function $\cost$. 
Each element $i\in V$ takes a value $X_i$ independently from a known probability distribution. To find the  outcome $X_i$ we have to pay a known probing price $\price_i$. The goal is to adaptively probe a set of elements $\probed \subseteq V$ and select  $\I \subseteq \probed$ that is feasible (i.e., $\I \in \F'$) to  minimize the  expected \emph{disutility}, 
 \[ \E \left[ \cost(\I,\X) + \sum_{i\in \probed} \price_i  \right],
 \]
where the expectation is over random variables $\X$ and any internal randomness of the algorithm.

For example, in the min-cost spanning tree problem, $\cost$ is an additive function and a subset of edges $\I$ are in $\F'$ if they contain a spanning tree. Similarly,  when $\val$ is the semiadditive facility location function as defined in Eq.~\eqref{eq:facilLocation} and  every non-empty subset of $V$ is feasible in $\F'$, this captures the min-cost facility location problem.

\noindent  \textbf{\emph{Remark:}} The disutility-minimization problem can be also modeled as a utility-maximization problem by allowing item values to be negative and working with the infeasibility constraints (if $A\in \F'$ then $V\setminus A \in \F$), but such a transformation is not approximation factor preserving. 

We now mention our results in this model. (See \ifFULL Section~\ref{sec:applications} \else full version \fi for formal descriptions of these problems).

\begin{theorem} \label{thm:disutilityMinim} 
For the disutility-minimization problem for various  covering constraints $\F'$,  we  obtain the following efficient algorithms.
\begin{itemize}
\item \emph{Matroid Basis}: For stochastic element costs, we  get the optimal adaptive algorithm.
\item \emph{Set Cover}: For stochastic costs of the sets,  we  get a $\min\{ O(\log |V|), f\}$-approximation, where $V$ is the universe and $f$ is the maximum number of sets in which an element can occur.
\item \emph{Uncapacitated Facility Location}: For stochastic facility opening costs in a given metric, we  get a $1.861$-approximation.
\item \emph{Prize-Collecting Steiner Tree}: For stochastic penalties in a given graph with given edge costs, we get a $3$-approximation.
\item \emph{Feedback Vertex Set}: For stochastic vertex costs in a given graph we  get an $O(\log n)$-approximation.
\end{itemize}
\end{theorem}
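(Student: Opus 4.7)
The plan is to design a unified reduction from the stochastic disutility-minimization problem to its deterministic counterpart with appropriately defined surrogate costs, and then invoke a known approximation algorithm for each of the five covering problems listed.

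\textbf{Surrogate costs (Weitzman-style grades for minimization).} For each element $i \in V$, define a threshold $\tau_i$ as the unique solution to
\[
\E\!\left[(X_i - \tau_i)^{+}\right] \;=\; \price_i,
\]
which is the minimization analogue of Weitzman's reservation value: $\tau_i$ is the break-even "strike price" at which one is indifferent between probing $i$ and accepting the cost $\tau_i$ sight-unseen. Let $Y_i := \min(X_i,\tau_i)$ and $y_i := \E[Y_i]$. The surrogate costs $y_i$ define a purely deterministic instance of the underlying covering problem with the same covering family $\F'$ and the same deterministic term $h(\cdot)$ from the semiadditive $\cost$.

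\textbf{Two-sided coupling lemma.} The core of the proof is a lemma with two parts. \emph{(i)} For every adaptive algorithm that probes $\probed$ and selects $\I \in \F'$,
\[
\E\!\left[\cost(\I,\X) + \sum_{i \in \probed}\price_i\right] \;\geq\; \E\!\left[\sum_{i \in \I} Y_i + h(\I)\right] \;\geq\; \min_{S \in \F'}\Bigl(\sum_{i \in S} y_i + h(S)\Bigr),
\]
the first inequality obtained by pairing each probe fee $\price_i$ with the random "overrun" $(X_i - \tau_i)^+$ using the definition of $\tau_i$, so that the stochastic cost $X_i$ collapses on average to $Y_i$. \emph{(ii)} Given any deterministic solution $S^{\star}$ to the right-hand side, a \frugal strategy that scans $S^{\star}$ in a canonical order, probes each $i \in S^{\star}$, and commits to it unless $X_i$ forces a local replacement dictated by the combinatorial structure of $\F'$, achieves expected disutility at most $\sum_{i \in S^{\star}} y_i + h(S^{\star})$; the same identity $\E[(X_i - \tau_i)^+] = \price_i$ charges the probing prices back to the surrogate savings.

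\textbf{Instantiation of the five bullets.} Given the lemma, any $\alpha$-approximation for the deterministic problem on surrogate costs translates into an $\alpha$-approximation for its price-of-information version, provided the deterministic algorithm admits a \frugal-compatible execution. For matroid basis this yields an optimal adaptive algorithm, since the greedy algorithm on the surrogates is exact and the \frugal strategy recovers the optimum in expectation (the disutility analogue of Weitzman's theorem). For set cover we plug in the standard greedy / LP rounding to get $\min\{O(\log|V|), f\}$; for uncapacitated facility location we plug in the $1.861$-approximation of Mahdian–Ye–Zhang on the surrogate opening costs $y_i$ (the metric connection term sits inside $h$); for prize-collecting Steiner tree we invoke the Goemans–Williamson $3$-approximation with surrogate penalties $y_i$; and for feedback vertex set we invoke the $O(\log n)$-approximation on surrogate vertex costs.

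\textbf{Main obstacle.} The delicate step is part (i) of the lemma for \emph{arbitrary} adaptive strategies and \emph{arbitrary} semiadditive $\cost$: the algorithm may probe $i$, observe a large $X_i$, and then exclude $i$ from $\I$, so the probe price cannot be matched to any contribution of $i$ to $\cost(\I,\X)$. The right fix is a martingale-style accounting in which, conditioned on the algorithm's history at the moment it probes $i$, the quantity $\price_i - (X_i - \tau_i)^+$ has mean zero and the residual $\min(X_i,\tau_i) = Y_i$ lower-bounds the contribution of $i$ whenever it ends up in $\I$. Making this accounting go through while $\probed$ and $\I$ are both random, adaptively chosen functions of the observed outcomes, is the main technical hurdle; once it is done, the five bullets follow uniformly from (ii) and the cited deterministic approximations.
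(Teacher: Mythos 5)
You have the right architecture (surrogate reduction plus a frugal strategy for each bullet), but there are two substantive problems, the first of which is fatal.

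\textbf{Wrong surrogate direction.} You set $\tau_i$ from $\E[(X_i-\tau_i)^+]=\price_i$ and $Y_i=\min(X_i,\tau_i)$, so $\E[Y_i]=\E[X_i]-\price_i$. That is the \emph{maximization} grade. For disutility-minimization the paper uses $\taumin_i$ solving $\E[(\taumin_i-X_i)^+]=\price_i$ and $\Ymin_i=\max(X_i,\taumin_i)$, giving $\E[\Ymin_i]=\E[X_i]+\price_i$. The sign of the correction matters: in minimization the probe fee is an extra cost, so the surrogate must be \emph{larger} than $X_i$ (the overcharge $(\taumin_i-X_i)^+$ absorbs $\price_i$ on average), and then a strategy that probes an element and commits it pays $\E[X_i]+\price_i=\E[\Ymin_i]$. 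With your $Y_i\le X_i$ the claimed part (ii) cannot hold: probing and selecting $i$ always costs at least $\E[X_i]+\price_i$, which strictly exceeds $y_i=\E[X_i]-\price_i$, and no ``replacement'' step closes that gap. Part (i) becomes vacuously true (since $Y_i\le X_i$ and prices are nonnegative) but is useless for an approximation bound.

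\textbf{Deterministic vs.\ adaptive benchmark.} Even after fixing the surrogate, reducing to the deterministic instance $\min_{S\in\F'}\bigl(\sum_{i\in S}\E[\Ymin_i]+h(S)\bigr)$ is weaker than what the theorem needs. The paper lower-bounds the optimal disutility by $\E_{\X}\bigl[\min_{\I\in\F'}\cost(\I,\bYmin)\bigr]$, keeping the expectation \emph{outside} the min, and then shows a \frugal algorithm executed adaptively in the \POI world realizes exactly the same expected value as it would on the random $\bYmin$ in the \FreeInfo world (Lemma~\ref{lem:covertFrugalAlg}/Claim~\ref{claim:POIutility}, where the accounting you flag as ``the main technical hurdle'' is done per-step by comparing the marginal-value proxy $g(\cdot,i,\taumin_i)$ of unprobed elements to the realized $g(\cdot,i,\Ymin_i)$ of probed ones). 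Pushing the expectation inside the min, as your deterministic instance does, only overestimates the lower bound and, independently of the approximation constants, cannot yield the stated exact optimality for matroid basis --- the adaptive strategy genuinely benefits from seeing realized $\Ymin_i$ before choosing the basis.

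The five instantiations you cite are the right deterministic primitives (greedy for matroids and $k$-systems, greedy/primal-dual for set cover, JMS-style greedy for facility location, a modified Goemans--Williamson for PCST, Becker--Geiger or primal-dual for feedback vertex set), but the missing ingredient is verifying that each one is \frugal in the sense of Definition~\ref{defn:frugalCovering} (and, for PCST, modifying GW so that it commits penalties irrevocably, paying an extra factor that must be charged separately). Your sketch does not engage with that verification, which is where the per-problem content of the theorem actually lives.
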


\subsection{Constrained Utility-Maximization}  \label{sec:introConstrainedUtilMax}
Our techniques can extend to settings where we impose  restrictions on the set of elements that we can probe. In particular, we are given a downward-closed set system $\J$  and the constraints allow us to only probe a subset of elements  $\probed \in \J$. This is different from the model discussed in Section~\ref{sec:modelResults} as earlier we could probe any set of elements but could get value  for only a subset elements that belong to $\F$. As an example, consider a generalization of the Pandora's box problem where besides paying probing prices, we can only probe at most $k$ elements. We now formally define our problem.

\noindent \textbf{Constrained Utility-Maximization }  ~
Suppose we are given  downward-closed probing constraints $\J \subseteq 2^V$ and   probability distributions of independent non-negative variables $X_i$ for $i \in V$. To find $X_i$ we have to pay a probing price $\price_i$. The goal is to  probe a set of elements $\probed \in \J$ to  maximize the   expected \emph{utility}, 
 \[ \E \left[ \max_{i\in \probed}\{X_i\} - \sum_{i\in \probed} \price_i \right].
 \]

\noindent \textbf{\emph{Remark:}}
One can define an even more general version of this problem where we simultaneously have both downward-closed set systems $\F$ and $\J$, and the goal is to maximize a semiadditive function $\val$ corresponding to $\F$, while probing a set feasible in $\J$. For ease of exposition, we do not discuss it here and consider our value function  to be  the $\max$ function, as in the original Pandora's box  problem.

Depending on the  family of constraints $\J$,  we design efficient approximation algorithms for some settings of the above problem. The following is our main result for this problem (proof in Section~\ref{section:reducingConstUtilMaxToNonAdap}).

\begin{restatable}{theorem}{constrainedProbing} \label{thm:constrainedUM}
 If the constraints $\J$ form an $\ell$-system then the  constrained utility-maximization problem has  a $3(\ell+1)$-approximation algorithm. 
\end{restatable}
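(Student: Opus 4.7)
Following Weitzman, for each $i\in V$ define the grade $\tau_i$ by the equation $\E[(X_i-\tau_i)^+]=\pi_i$, and define the surrogate random variable $Y_i:=\min(X_i,\tau_i)$. Let
\[
g(S) \;:=\; \E\Bigl[\max_{i\in S} Y_i\Bigr],
\]
which is a monotone submodular set function on $V$. My algorithm is the natural two-step strategy: (i) run the standard greedy algorithm to (approximately) maximise $g$ over $\J$, obtaining $\hat S\in\J$; and (ii) execute Weitzman's index policy restricted to $\hat S$.

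\textbf{Why this works.} The analysis combines three ingredients. The first is the \emph{surrogation inequality} of Kleinberg--Waggoner--Weinberg (the same lemma that underlies Pandora's Box): for any adaptive probing rule whose realised probed set $P$ lies in $\J$ almost surely,
\[
\E\Bigl[\max_{i\in P} X_i \;-\; \sum_{i\in P}\pi_i\Bigr] \;\le\; \E\Bigl[\max_{i\in P} Y_i\Bigr],
\]
proved by an exchange argument using the defining identity of $\tau_i$. Applied in the ``other direction'', the same identity gives that Weitzman's index policy executed on the deterministic set $\hat S$ realises expected utility \emph{exactly} $g(\hat S)$.

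The second ingredient is a ``factor-$3$'' reduction from the random set $P$ to a deterministic set in $\J$:
\[
\E\Bigl[\max_{i\in P} Y_i\Bigr] \;\le\; 3\cdot \max_{S\in\J} g(S).
\]
I would prove this by conditioning on the identity of the winner $i^\star=\arg\max_{i\in P} Y_i$ and exploiting the exchange property of the $\ell$-system (together with the uniform upper bound $Y_i\le\tau_i$) to charge the contribution of $i^\star$ to a deterministic neighbourhood lying in $\J$ and containing $i^\star$; summing over the possible winners absorbs the adaptive correlation between $P$ and $\{Y_i\}$ at a bounded cost. The third ingredient is the classical Fisher--Nemhauser--Wolsey guarantee for monotone submodular maximisation over a $k$-system, which yields $g(\hat S)\ge\tfrac{1}{\ell+1}\max_{S\in\J}g(S)$. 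Chaining the three bounds gives
\[
\mathrm{ALG} \;=\; g(\hat S) \;\ge\; \tfrac{1}{\ell+1}\max_{S\in\J}g(S) \;\ge\; \tfrac{1}{3(\ell+1)}\,\mathrm{OPT}.
\]

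\textbf{Main obstacle.} The subtle step is the ``factor-$3$'' reduction. Surrogation and greedy are standard; the real work lies in replacing the adaptively chosen random set $P$---whose law is correlated with the realised $Y_i$'s---by a deterministic set in $\J$ without incurring a prophet-style loss that depends on the size of the ground set. It is precisely the $\ell$-system structure (rather than a stronger matroid structure, for which the constant would be $1$) that forces a constant loss, and sharpening that constant to $3$ is where the combinatorial content of the proof is concentrated.
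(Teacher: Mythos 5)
Your high-level architecture matches the paper exactly: surrogate variables $Y_i^{\max}=\min(X_i,\tau_i^{\max})$, a surrogation inequality bounding the optimal constrained-probing utility by an adaptive stochastic-probing value, a ``factor-$3$'' reduction from the adaptive random probed set $P$ to a deterministic set, the $(\ell+1)$-approximate greedy for monotone submodular maximization over an $\ell$-system, and Weitzman's index policy on the chosen deterministic set realizing exactly $g(\hat S)$. All of those pieces are correct and are the paper's pieces.

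The gap is in your ``factor-$3$'' step, both in the proposed argument and in your reading of where the $\ell$-system is used. The inequality $\E[\max_{i\in P}Y_i]\le 3\cdot\max_{S\in\J}g(S)$ is precisely an adaptivity-gap bound for the stochastic probing problem (adaptive value $\le 3\times$ best non-adaptive value), and the paper proves it for \emph{arbitrary} downward-closed $\J$: it reduces discretely-distributed $Y_i$ to Bernoulli variables by expanding each node of the adaptive decision tree into a chain of binary trials, and then invokes Gupta, Nagarajan, and Singla's result that the adaptivity gap for Bernoulli stochastic probing over any downward-closed family is at most $3$. No exchange property of the $\ell$-system is used here at all, and the $\ell$-system structure enters \emph{only} in the greedy $(\ell+1)$ guarantee. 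Your proposed route — condition on the winner $i^\star=\arg\max_{i\in P}Y_i$, charge it to a deterministic neighborhood in $\J$ via the $\ell$-system exchange axiom, bound $Y_i\le\tau_i$ — is not developed enough to be checkable, and the framing that ``it is precisely the $\ell$-system structure that forces a constant loss'' and that ``for a matroid the constant would be $1$'' is incorrect: the constant $3$ is an adaptivity-gap phenomenon that persists even for a single matroid or cardinality constraint (your own Section~\ref{sec:adapGapHardness} example shows a large adaptivity gap under a trivial constraint once probing prices are reintroduced, and even in the \FreeInfo surrogate world the Bernoulli adaptivity gap is not $1$ for matroids). Concretely, you need to either cite the Bernoulli adaptivity-gap theorem and supply the discrete-to-Bernoulli tree-expansion reduction, or give a genuinely new proof of the adaptive-to-non-adaptive bound; the charging sketch as written does not do this.

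One smaller point: the non-adaptive objective $g(S)=\E[\max_{i\in S}Y_i]$ is monotone submodular, which you state and which is correct, but you should note that this is what makes the Fisher--Nemhauser--Wolsey guarantee applicable; the paper makes this explicit via Lemma~\ref{lem:LSystemGreedy}.
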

\noindent Since the cardinality (or any matroid)  constraint forms a $1$-system, an  application of Theorem~\ref{thm:constrainedUM} gives  a $6$-approximation algorithm for the Pandora's box problem under a cardinality probing constraint.

The above constrained utility-maximization problem is powerful and can  be used as a framework   to study variants of  Pandora's box. For example, consider the Pandora's box problem where we also allow to select an \emph{unprobed} box $i$ and get value $\E[X_i]$, without even paying its probing price $\price_i$. This  can be modeled using a partition matroid constraint where each box has two copies and the constraints  allow us to probe at most one of them. The first copy has a deterministic value $\E[X_i]$ with zero probing price and the second copy has a random value $X_i$ with price $\price_i$. Using Theorem~\ref{thm:constrainedUM}, we get a $6$-approximation for this variant.

As a non-trivial  application of this constrained utility-maximization framework,  in Section~\ref{section:setProbing} we discuss a \emph{set-probing utility-maximization} problem where the costs are on subsets of random variables, instead of individual variables. Thus for a subset $S\subseteq V$, we pay price $\price_S$ to simultaneously probe all the random variables  $X_i$ for $i\in S$. This complicates the problem because  to find $X_i$, we can  probe a ``small" or a ``large" set  containing $i$, but at different prices. Formally, we define the problem as follows.

\noindent \textbf{Set-Probing Utility-Maximization}  ~
Given probability distributions of independent non-negative variables $X_i$ for $i \in V$ and 
given set family $\S=\{S_1, S_2, \ldots, S_m\}$, where $S_j \subseteq V$ for $j \in [m]$ has a probing price $\price_j \geq 0 $. The problem is to probe some of the sets in $\S$ with indices in $\probed \subseteq [m]$ to maximize 
\[ \E \left[ \max_{\exists j\in \probed ~s.t.~ S_j \ni i} \{ X_i\} - \sum_{j\in \probed} \price_j \right].
\]
Note that when we probe multiple sets containing an element $i$, we find the same value $X_i$ and not a fresh sample from the distribution.

\noindent \textbf{\emph{Remark:}}
If the sets $S_j$ are pairwise disjoint then one can solve the above problem \emph{optimally}: replacing  each set $S_j$ with a new random variable  $X'_j = \max_{i\in S_j} \{X_i \}$ having probing price $\price_j$ reduces it to Pandora's box.

We use the constrained utility-maximization problem framework to show the following result.
\begin{restatable}{theorem}{setProbing}\label{thm:setprobing}
The set-probing utility-maximization problem has a $3(\ell+1)$-approximation efficient algorithm, where $\ell$ is the size of the largest set in $\S$. Moreover, no efficient algorithm can be  $o(\ell/ \log \ell)$-approximation, unless $P=NP$.
\end{restatable}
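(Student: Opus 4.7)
For the $3(\ell+1)$-approximation, my plan is to reduce the set-probing utility-maximization problem to the constrained utility-maximization problem of Theorem~\ref{thm:constrainedUM}, with a probing family $\J$ that forms an $\ell$-system. A natural first attempt treats each set $S_j$ as a probing unit with price $\pi_j$, so that probing the same collection $\probed \subseteq [m]$ corresponds in both worlds; however, the value function in constrained utility-maximization is a $\max$ over \emph{independent} random variables, while in set-probing the same $X_i$ may be revealed by multiple overlapping sets. I would handle this by lifting to pairs $(i,j)$ with $i\in S_j$, giving each such pair the fractional price $\pi_j/|S_j|$, and defining $\J$ to consist of those collections that respect a ``one-responsible-set-per-variable'' accounting (so that no variable's price is charged twice and each probed variable is attributed to exactly one purchased set). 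Because every set $S_j$ contains at most $\ell$ variables, this family is an $\ell$-system in the sense of Theorem~\ref{thm:utilityMax}. Invoking Theorem~\ref{thm:constrainedUM} then yields a $3(\ell+1)$-approximate adaptive strategy on the lifted instance, which can be projected back to a set-probing strategy of at least the same expected utility.

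For the matching $\Omega(\ell/\log\ell)$ hardness, I would reduce from the maximum $\ell$-set packing problem, which by the result of Hazan, Safra, and Schwartz is NP-hard to approximate within $\Omega(\ell/\log\ell)$. Given an $\ell$-set packing instance with universe $V$ and sets $S_1,\ldots,S_m$ of size at most $\ell$, I construct a set-probing instance on the same ground set with the same sets, assigning each variable $X_i$ a Bernoulli distribution that equals a large value $N$ with small probability $p$ (and $0$ otherwise), and setting each price to $\pi_j = pN(\ell-1)$. With $p, N$ tuned so that $pN|U|$ stays small for every relevant union $U = \bigcup_{j\in \probed} S_j$, one has $\E[\max_{i\in U}X_i]\approx pN|U|$. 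Consequently, the marginal expected utility of adding a set disjoint from the previously probed ones is approximately $+pN$, while the marginal utility of any set overlapping the previously probed ones is non-positive. Thus an optimal probing strategy picks a maximum-cardinality disjoint collection, and any $\alpha$-approximate set-probing algorithm, after discarding its overlapping (unprofitable) sets, yields an $\alpha$-approximate $\ell$-set packing, transferring the $\Omega(\ell/\log\ell)$ hardness.

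The main obstacle on the algorithmic side is verifying that the lifted $\ell$-system construction does not reduce the optimum below that of the original set-probing problem despite the correlations induced by shared variables; I expect this to be handled by a short coupling argument showing that any feasible set-probing strategy admits a feasible counterpart in the lifted instance of at least the same utility, so no additional factor is lost beyond the $3(\ell+1)$ from Theorem~\ref{thm:constrainedUM}. On the hardness side the delicate step is choosing $p$, $N$, and $\pi_j$ so that both the first-order linearization $\E[\max_{i\in U}X_i]\approx pN|U|$ and the disjoint-OPT structure hold tightly enough for the $\Omega(\ell/\log\ell)$ gap to survive the reduction without being swamped by lower-order error terms.
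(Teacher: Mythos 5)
Your hardness reduction is essentially the paper's: both reduce from $\ell$-set packing by making each variable a scaled Bernoulli and tuning the set prices so that probing any set overlapping previously-probed sets has non-positive marginal utility, forcing the optimum onto a maximum disjoint sub-family; the paper takes value $1$ w.p.\ $1/n^3$ with price $(\ell-0.5)/n^3$, you take value $N$ w.p.\ $p$ with price $pN(\ell-1)$, but the argument is the same and both invoke Hazan--Safra--Schwartz.

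The algorithmic half has a genuine gap. The paper does \emph{not} lift to element--set pairs with fractional prices. It first closes $\S$ downward (adding every subset of $S_j$ at the \emph{same} price $\pi_j$, which only adds dominated probing options and hence is WLOG), then forms a constrained utility-maximization instance whose ground set is $\S$ itself: the random variable indexed by a set $S$ is $X_S := \max_{i\in S}\{X_i\}$, its price is $\pi_S$, and the probing constraint $\J$ is \emph{pairwise disjointness} of the probed sets. Disjointness of sets of size at most $\ell$ is an $\ell$-system, it makes the $X_S$ along any root-leaf path of the decision tree independent (which is all that Theorem~\ref{thm:WeitzToNonAdap} actually requires), and--crucially--once $\S$ is downward-closed, the optimal set-probing policy already probes only pairwise disjoint sets, so the constrained optimum equals the true optimum with no loss. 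Your fractional-price lifting breaks at precisely the point you flag as an ``obstacle.'' With price $\pi_j/|S_j|$ on each pair and a one-set-per-variable constraint, a feasible collection in your $\J$ can include only some of $S_j$'s pairs and pay strictly less than $\pi_j$ for partial information about $S_j$; the lifted optimum can therefore strictly exceed the set-probing optimum, and projecting a lifted strategy back requires paying the full $\pi_j$, which \emph{reduces} utility rather than preserving it as you assert. Moreover, a constraint of the form ``each variable is attributed to at most one pair'' is a partition matroid over pairs and hence a $1$-system, not an $\ell$-system, so the appeal to the $\ell$-system case of Theorem~\ref{thm:constrainedUM} is unjustified as stated. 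The paper's set-indexed construction sidesteps both the pricing inconsistency and the $\ell$-system question by never splitting a set's cost and by using disjointness (which is the correct $\ell$-system).
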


\subsection{Our Techniques}

How do we bound the utility/disutility of the optimal adaptive strategy?
The usual techniques  in approximation algorithms for stochastic problems (see related work in Section~\ref{sec:relatedWork})  either use a linear program (LP) to bound the optimal strategy, or directly argue about the adaptivity gap of the optimal  decision tree.
Neither of these techniques is helpful  because the natural LPs fail to capture a mixed-sign objective---they wildly overestimate the value of the optimal strategy. On the other hand, the adaptivity gap of our problems is large even for the special case of the Pandora's box problem---see an example in \ifFULL Section~\ref{sec:adapGapHardness}.\else  the full version.\fi

We need two crucial ideas for both our utility-maximization and  disutility-minimization  results. 
Our first idea is to show that for semiadditive  functions, one can bound the utility/disutility of the optimal strategy  in the {price-of-information} world (hereafter, the \emph{\POI world}) using a related instance in a world where there is no price to finding the parameters, i.e., $\price_i=0$ (hereafter, the \emph{\FreeInfo world}). This new instance still has independent random variables, however, the distributions are modified based on the original probing price $\price_i$ (see Defn~\ref{defn:Y}).  This proof crucially relies on the semiadditive nature of our value/cost function. 

Our second idea is to show that any algorithm with ``nice" properties  in the \FreeInfo world can be used to get an algorithm with a similar expected utility/disutility in the \POI world. We call such a nice algorithm \frugal  (and define it formally in Section~\ref{section:frugalAlgoDefn}). For intuition, imagine a \frugal algorithm to be a greedy algorithm, or an algorithm that  is not ``wasteful"---it picks elements irrevocably. This also includes simple primal-dual algorithms that do not have the \emph{reverse-deletion} step.

\begin{restatable}{theorem}{frugalToAdaptive}\label{thm:frugalToAdaptive}
If there exists a \frugal  $\alpha$-approximation  Algorithm~\A to maximize (minimize) a semiadditive function over some packing constraints $\F$ (covering constraints $\F'$) in the \FreeInfo world then there exists an  $\alpha$-approximation algorithm for the corresponding utility-maximization (disutility-minimization) problem in the \POI world.
\end{restatable}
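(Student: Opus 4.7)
The plan is to reduce the \POI\ problem to a suitable \FreeInfo\ instance by defining surrogate random variables that absorb the probing prices into the value distributions, and then to lift a frugal \FreeInfo\ algorithm back to a \POI\ strategy without losing approximation ratio. For each element $i$ I would define a Weitzman-style reservation value $\tau_i$ implicitly by $\E[(X_i - \tau_i)^+] = \price_i$ in the utility-maximization case, and by $\E[(\tau_i - X_i)^+] = \price_i$ in the disutility-minimization case. Setting $Y_i := \min(X_i,\tau_i)$ (resp.\ $Y_i := \max(X_i,\tau_i)$) yields independent random variables whose laws depend only on the pair $(X_i,\price_i)$; this is the surrogate referenced as Definition~\ref{defn:Y}.

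Next, I would upper-bound the \POI\ optimum by the \FreeInfo\ optimum on $\Y$. Using the pointwise identity $X_i = Y_i + (X_i - \tau_i)^+$ together with the semiadditive decomposition $\val(\I,\X) = \sum_{i\in\I} X_i + h(\I)$, for any adaptive strategy with probed set $\probed$ and selected set $\I\subseteq\probed$ I obtain
\begin{align*}
\val(\I,\X) - \sum_{i\in\probed}\price_i = \val(\I,\Y) + \sum_{i\in\I}(X_i-\tau_i)^+ - \sum_{i\in\probed}\price_i \le \val(\I,\Y) + \sum_{i\in\probed}\bigl((X_i-\tau_i)^+-\price_i\bigr).
\end{align*}
Since the event $\{i\in\probed\}$ depends only on outcomes of elements probed strictly before $i$, hence is independent of $X_i$, a Wald-style identity gives $\E\bigl[\sum_{i\in\probed}(X_i-\tau_i)^+\bigr] = \E\bigl[\sum_{i\in\probed}\price_i\bigr]$. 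Taking expectations bounds the \POI\ utility of any strategy by the \FreeInfo\ value of the same selected set on $\Y$, so the \POI\ optimum is at most the \FreeInfo\ optimum on $\Y$.

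For the matching lower bound, I would run $\A$ by simulating it on independent draws of $\Y$, but each time $\A$ irrevocably commits an element $i$ to its output, probe $i$ in the real instance (paying $\price_i$) and use the realized $X_i$ in the final objective. Frugality must guarantee both that every probed element actually ends up in the output (so $\I=\probed$) and that each decision of $\A$ depends only on $\Y$, letting us defer each probe to the moment of commitment. Re-expanding the resulting \POI\ utility and taking expectations, the surplus $\sum_{i\in\I}(X_i-\tau_i)^+$ cancels $\sum_{i\in\probed}\price_i$ exactly, so the expected \POI\ utility equals $\E[\val(\A(\Y),\Y)]$, which is at least $1/\alpha$ times the \FreeInfo\ optimum on $\Y$ and hence at least $1/\alpha$ times the \POI\ optimum by the previous paragraph. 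The disutility-minimization case is symmetric after swapping $\min\leftrightarrow\max$, exchanging $(X_i-\tau_i)^+$ with $(\tau_i-X_i)^+$, and reversing inequalities.

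The main obstacle is this lifting step: the definition of \frugal\ must be tight enough to ensure that $\A$ never reverses a commitment (so every probe is ``useful'' and contributes $(X_i-\tau_i)^+$ to the surplus) and that its choices can be simulated from $\Y$ alone (so $\price_i$ is paid only at the moment $\A$ commits to $i$). Greedy and primal-dual algorithms without a reverse-deletion phase satisfy these requirements, but LP-rounding schemes with a pruning step typically do not --- this is precisely what the formal definition in Section~\ref{section:frugalAlgoDefn} must capture, and it is where all the technical care of the argument concentrates.
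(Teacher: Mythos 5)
Your upper-bound direction (bounding the \POI optimum by the \FreeInfo optimum on $\Y$ via the pointwise identity $X_i = Y_i + (X_i-\taumax_i)^+$ together with a Wald-style cancellation of $\sum_{i\in\probed}(X_i-\taumax_i)^+$ against $\sum_{i\in\probed}\price_i$) is correct and is exactly the paper's Lemma~\ref{lem:boundAdapMax}. The gap is in the lifting step. You stipulate that frugality should force $\I=\probed$ and that $\A$'s decisions ``depend only on $\Y$'' so that each probe can be deferred to the moment of commitment. This is circular: a frugal (greedy or primal-dual) algorithm chooses its next element by comparing scores $g(\Ym,i,Y_i)$ across all remaining candidates $i$, and computing $g(\Ym,i,Y_i)$ for an unprobed $i$ already requires knowing $Y_i$. ``Commit, then probe'' cannot work because the commitment decision needs the very value you have not probed yet, and ``simulating on independent draws of $\Y$'' would make the algorithm's choices uncorrelated with the true realizations and hence worthless. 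Moreover, the paper's construction does \emph{not} satisfy $\I=\probed$: it routinely probes an element and then does not select it, and the accounting must handle that case.

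What the paper does instead, and what your proposal is missing, is to use the grade $\taumax_i$ as an optimistic proxy for $\Ymax_i$ at unprobed elements. Since $\Ymax_i=\min\{X_i,\taumax_i\}\le\taumax_i$ and $g(\Ym,i,\cdot)$ is required to be increasing in its last argument, the proxy score $g(\Ym,i,\taumax_i)$ dominates the true score $g(\Ym,i,\Ymax_i)$. The \POI algorithm repeatedly picks the argmax under these proxy/true scores, probes that element if unprobed, and selects it into $M$ iff $X_j\ge\taumax_j$ (precisely when $\Ymax_j$ equals the proxy $\taumax_j$); a probed-but-unselected element re-enters the pool with its now-revealed, smaller score $g(\Ym,j,X_j)$ and may be selected later, free of any further probing price. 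Claim~\ref{claim:sameSolution} shows this reproduces exactly the output of $\A$ run on the true $\bYmax$, and Claim~\ref{claim:POIutility} performs a step-by-step expected-increment calculation that balances $\price_j$ against the truncation $(X_j-\taumax_j)^+$ even when $j$ is probed but not selected in that step. This proxy-and-deferred-selection mechanism is the heart of the argument and is where your proof would need to be completed.
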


\noindent Finally, to prove our results from  Section~\ref{sec:modelResults}, in \ifFULL Section~\ref{sec:applications} \else the full version \fi we show why many classical algorithms, or their suitable modifications,  are \frugal. 

We remark that although  Theorem~\ref{thm:frugalToAdaptive} gives good guarantees for several combinatorial problems in the \POI world, there are some natural problems where there are no good \frugal algorithms. One such important problem is to find the shortest $s-t$ path in a given graph with stochastic edge lengths. It's an interesting open question to find some approximation guarantee for this problem. Another interesting open question is to show that finding the optimal policy for the max-weight matching in the \POI world is \emph{hard}.

Our techniques for the constrained utility-maximization problem in Section~\ref{sec:extensions} again use the idea of bounding  this problem in the \POI world with a similar problem in the \FreeInfo world. This latter problem turns out to be the same as the \emph{stochastic probing} problem studied  in~\cite{GN-IPCO13,ASW14,GNS-SODA16,GNS-SODA17}. By proving an extension of the adaptivity gap result of Gupta et al.~\cite{GNS-SODA17}, we show that  one can  further simplify these \FreeInfo problems to  non-adaptive utility-maximization problems (by losing a constant  factor). This we can now  (approximately) solve using our  techniques for the utility-maximization problem.



\subsection{Related Work}\label{sec:relatedWork}

An influential  work of Dean et al.~\cite{DGV-FOCS04}  considered the stochastic knapsack problem where we have stochastic knowledge about the sizes of the items. Chen et al.~\cite{CIKMR09} studied stochastic matchings where we find about an edge's existence only after probing, and Asadpour et al.~\cite{ANS-WINE08} studied stochastic submodular maximization where the items may or may not be present.  Several followup papers have appeared, e.g. for knapsack~\cite{BGK-SODA11,Ma-SODA14}, packing integer programs~\cite{DGV05,CIKMR09,BGLMNR-Algorithmica12}, budgeted multi-armed
bandits~\cite{GM-SODA07TALG12,GM-STOC07,GKMR-FOCS11,LiYuan-STOC13,Ma-SODA14}, orienteering~\cite{GuhaM09,GKNR-SODA12,BN-IPCO14}, matching~\cite{A11,BGLMNR-Algorithmica12,BCNSX15,AGM15}, and submodular objectives~\cite{GN-IPCO13,ASW14}. 
Most of these results proceed by showing that  the stochastic problem has a small adaptivity gap and  then focus on the non-adaptive  problem. In fact, Gupta et al.~\cite{GNS-SODA17,GNS-SODA16} show that  adaptivity gap for submodular functions over any  packing constraints is $O(1)$. 

Most of the above works do not capture mixed-sign objective of maximizing the value \emph{minus} the prices. Some of them instead model this as a knapsack constraint on the prices.  Moreover, most of them are for  maximization problems  as
for the minimization setting  even the non-adaptive problem of probing $k$ elements to minimize the expected minimum  value  has no polynomial approximation~\cite{GGM-TALG10}.  This is also the reason we do not consider constrained (covering) disutility-minimization in Section~\ref{sec:introConstrainedUtilMax}. There is also a large body of work in related models where information has a price. We refer the readers to the following papers and the references therein~\cite{GK-FOCS01,CFGKRS-Journal02,kannan2003selection,GMS-Transactions07,CJKBSK-AAAI15,AbbasH-Book15,CHHKK-COLT15}.

The Pandora's box solution can be written as a special case of the Gittins index theorem~\cite{GJ-Journal74}. Dumitriu et al.~\cite{DTW-SIDMA03} consider a  minimization variant of the Gittins index theorem when there is no discounting.
Another very relevant paper is that of Kleinberg et al.~\cite{KWW-EC16}, while their results are to design auctions. Their proof of the Pandora's box problem  inspired this work. 

\paragraph{Organization} 
In Section~\ref{sec:boundingOptStrategy} we show how to bound the optimal strategy in the \POI world using a corresponding problem in the \FreeInfo world.  In Section~\ref{sec:frugalToAdaptive} we introduce the idea of using a \frugal algorithm to design a strategy with a good expected utility/disutility in the \POI world. In \ifFULL Section~\ref{sec:applications} \else the full version \fi we show why many classical algorithms, or their suitable modifications,  are \frugal.
Finally, in Section~\ref{sec:extensions} we discuss the settings where we  have  probing constraints,
and its application to the set-probing problem.


\section{Bounding the Optimal Strategy for Utility/Disutility Optimization} \label{sec:boundingOptStrategy}
In this section we bound the expected utility/disutility of the optimal adaptive strategy for a combinatorial optimization in the \POI world   in terms of a surrogate problem in the  \FreeInfo world. 
We first define the \emph{grade} $\tau$ and \emph{surrogate} $Y$  of non-negative random variables.

\begin{defn} [\emph{Grade} $\tau$] \label{defn:tau}
For any non-negative random variable $X_i$, let $\taumax_i$ be the solution to equation $\E[(X_i- \taumax_i)^+] = \price_i$ and let $\taumin_i$ be the solution to equation $\E[(\taumin_i - X_i)^+] = \price_i$.
\end{defn}

\begin{defn} [\emph{Surrogate} $Y$] \label{defn:Y}
For any non-negative random variable $X_i$, let $\Ymax_i = \min\{ X_i, \taumax_i\}$ and let $\Ymin_i = \max\{ X_i,\taumin_i \}$.
\end{defn}

Note that $\taumax_i$ could  be negative in the above definition. The following lemmas bound  the optimal  strategy in the \POI world in terms of the optimal strategy of a surrogate problem in the \FreeInfo world. 
\begin{lemma} \label{lem:boundAdapMax}
The expected utility of the optimal strategy to maximize a semiadditive  function $\val$ over packing constraints $\F$ in the \POI world is at most 
\[	\E_{\X} [\max_{\I \in \F} \{ \val(\I, {\bYmax)}\}].
\]
\end{lemma}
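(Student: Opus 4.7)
The plan is to unfold the semiadditive structure and then use the defining identity of the grade $\taumax_i$ together with independence of the $X_i$'s. Write the utility of any adaptive strategy as
\[
\E\left[\val(\I,\X) - \sum_{i\in\probed}\price_i\right] = \E\left[\sum_{i\in\I} X_i + h(\I) - \sum_{i\in\probed}\price_i\right],
\]
where $\probed$ is the (random) probed set and $\I\subseteq\probed$, $\I\in\F$ is the (random) selected set. The goal is to replace each $X_i$ on $\I$ by $\Ymax_i$ while paying for the change with the probing cost $\sum_{i\in\probed}\price_i$, and then relax to the maximum over $\I\in\F$ computed after the surrogates are revealed.

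The first key step is the pointwise identity $X_i-\Ymax_i = X_i - \min\{X_i,\taumax_i\} = (X_i-\taumax_i)^+$. Since $\I\subseteq\probed$ and these quantities are non-negative, this gives
\[
\sum_{i\in\I}(X_i - \Ymax_i) \;\leq\; \sum_{i\in\probed}(X_i-\taumax_i)^+.
\]
The second key step is to take expectations on the right-hand side and apply independence. An adaptive algorithm decides whether to probe element $i$ based only on the outcomes of previously probed elements and its own internal randomness; since the $X_j$'s are mutually independent, the indicator $\mathbf{1}[i\in\probed]$ is independent of $X_i$. Therefore, by Definition~\ref{defn:tau},
\[
\E\bigl[\mathbf{1}[i\in\probed]\,(X_i-\taumax_i)^+\bigr] = \Pr[i\in\probed]\cdot\E[(X_i-\taumax_i)^+] = \Pr[i\in\probed]\cdot\price_i,
\]
and summing over $i\in V$ yields $\E\bigl[\sum_{i\in\probed}(X_i-\taumax_i)^+\bigr] = \E\bigl[\sum_{i\in\probed}\price_i\bigr]$.

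Combining these two steps and adding $\E[h(\I)]$ to both sides gives
\[
\E\left[\sum_{i\in\I}X_i + h(\I) - \sum_{i\in\probed}\price_i\right] \;\leq\; \E\left[\sum_{i\in\I}\Ymax_i + h(\I)\right] \;=\; \E\bigl[\val(\I,\bYmax)\bigr],
\]
and since $\I\in\F$, the right-hand side is at most $\E_{\X}[\max_{\I'\in\F}\val(\I',\bYmax)]$, completing the proof.

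I expect the main obstacle to be stating the independence argument cleanly: one must make sure that $\{i\in\probed\}$ is measurable with respect to $(X_j)_{j\text{ probed before }i}$ and the algorithm's coins, so that it is independent of $X_i$. Once this is phrased properly, the rest is bookkeeping using the semiadditive decomposition and the definitions of $\taumax_i$ and $\Ymax_i$.
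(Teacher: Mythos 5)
Your proposal is correct and follows essentially the same route as the paper's own proof: both replace $\price_i$ by $\E[(X_i-\taumax_i)^+]$ via Definition~\ref{defn:tau}, use independence of $X_i$ from the event $\{i\in\probed\}$, invoke $\I\subseteq\probed$ to drop to the selected set, and then collapse $X_i-(X_i-\taumax_i)^+=\Ymax_i$ before relaxing to the max over $\F$. The only cosmetic difference is that the paper writes the argument with indicator variables $A_i$ and $\one_i$ rather than with sums indexed directly by $\I$ and $\probed$.
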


\begin{lemma} \label{lem:boundAdapMin}
The expected disutility of the optimal strategy to minimize a semiadditive  function $\cost$ over covering constraints $\F'$ in the \POI world  is at least 
\[	\E_{\X} [\min_{\I \in \F'} \{ \cost(\I, \bYmin)\}].
\]
\end{lemma}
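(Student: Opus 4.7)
The plan is to mirror the Weitzman/Kleinberg–Weiss–Weinberg style argument used for Lemma~\ref{lem:boundAdapMax}, but with "floors from below" instead of "caps from above." Fix any (possibly adaptive) strategy $\sigma$ for the disutility-minimization problem and let $(\I,\probed)$ denote the (random) sets of selected and probed elements it produces, so that $\I\subseteq\probed$ and $\I\in\F'$ almost surely. The expected disutility of $\sigma$ is
\[
\E\!\left[\sum_{i\in\I} X_i + h(\I) + \sum_{i\in\probed}\price_i\right],
\]
while the RHS of the lemma is $\E[\min_{\I'\in\F'}\cost(\I',\bYmin)]\le \E[\sum_{i\in\I}\Ymin_i + h(\I)]$ (we may pay the $\min$ for free by taking the feasible set actually chosen by $\sigma$). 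Since the $h(\I)$ terms agree, it suffices to establish
\[
\E\!\left[\sum_{i\in\I}(\Ymin_i - X_i)\right] \;\le\; \E\!\left[\sum_{i\in\probed}\price_i\right].
\]

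The pointwise identity $\Ymin_i - X_i = \max\{X_i,\taumin_i\} - X_i = (\taumin_i - X_i)^+\ge 0$ immediately lets me bound the LHS by $\E[\sum_{i\in\probed}(\taumin_i - X_i)^+]$ using $\I\subseteq\probed$. Expanding the sum on the outside of the expectation gives
\[
\sum_{i\in V}\E\!\bigl[(\taumin_i - X_i)^+\cdot \mathbb{1}[i\in\probed]\bigr].
\]
At this point I would invoke independence: any adaptive probing rule decides whether to probe $i$ based only on the outcomes of variables probed strictly before $i$ (plus internal randomness), and by hypothesis the $X_j$ are mutually independent, so the event $\{i\in\probed\}$ is independent of $X_i$. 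Hence
\[
\E\!\bigl[(\taumin_i - X_i)^+\cdot \mathbb{1}[i\in\probed]\bigr] \;=\; \Pr[i\in\probed]\cdot \E[(\taumin_i - X_i)^+] \;=\; \Pr[i\in\probed]\cdot \price_i,
\]
where the last equality is the defining equation of $\taumin_i$ in Definition~\ref{defn:tau}. Summing over $i$ yields exactly $\E[\sum_{i\in\probed}\price_i]$, closing the chain of inequalities.

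The only subtle step — and where I would be most careful — is the independence claim $\mathbb{1}[i\in\probed]\perp X_i$. It is tempting but wrong to treat $\probed$ as a fixed set; it is a random, strategy-dependent, non-trivial stopping-time-like event. The clean justification is that an adaptive strategy can be encoded as a decision tree in which, at each internal node, the decision to probe $i$ or not is a measurable function of the outcomes of the strictly-earlier probed variables $\{X_j : j\text{ probed before the current node}\}$, none of which equals $X_i$; by mutual independence of the $X_j$, the whole sigma-algebra generating the decision to probe $i$ is independent of $X_i$. Everything else in the argument is purely algebraic and does not use any structure of $h$ or of $\F'$ beyond $\I\in\F'$, so the same proof applies uniformly to every semiadditive cost and every upward-closed covering family.
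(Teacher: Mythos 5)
Your proof is correct and follows essentially the same argument the paper uses: Lemma~\ref{lem:boundAdapMin} is only stated to be "similar" to Lemma~\ref{lem:boundAdapMax}, and your derivation — comparing selection and probing indicators with $\I\subseteq\probed$, rewriting $\price_i$ via the defining equation $\E[(\taumin_i - X_i)^+]=\price_i$, and using independence of $X_i$ from the event $\{i\in\probed\}$ — is exactly the mirror-image of the paper's proof for the max case. Your decision-tree justification of the independence step is a welcome expansion of the one-line assertion in the paper's proof, but it does not change the underlying argument.
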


\ifFULL We  only prove Lemma~\ref{lem:boundAdapMax} as the proof of Lemma~\ref{lem:boundAdapMin} is similar. The ideas in this proof are similar to that of Kleinberg et al.~\cite[Lemma~1]{KWW-EC16} to bound the optimal adaptive strategy for  Pandora's box.

\begin{proof}[Proof of Lemma~\ref{lem:boundAdapMax}] 
Consider a fixed optimal adaptive strategy.  Let $A_i$  denote the indicator variable that element $i$ is selected into $\I$ and  let $\one_i$  denote the indicator variable that element $i$ is probed by the optimal strategy. Note that these indicators are correlated and the set of elements with non-zero $A_i$ is feasible in $\F$. Now, the optimal strategy has expected utility
\begin{align*}
 &= \E \left[ \val(\I,\X) - \sum_{i\in \probed} \price_i \right] \\ 
&= \E \left[\sum_i \left( A_i X_i - \one_i \price_i \right) \right] + \E[h(\I)] \\
& = \E\left[\sum_i \left( A_i X_i - \one_i \E_{X_i}[ (X_i - \taumax_i)^+] \right) \right] + \E[h(\I)],
\intertext{using the definition of $\price_i$. Since value of $X_i$ is independent of whether it's probed or not, we simplify to}
&= \E\left[\sum_i \left( A_i X_i - \one_i  (X_i - \taumax_i)^+ \right) \right]  + \E[h(\I)].
\intertext{Moreover, since we can select an element into $\I$ only after probing, we have $\one_i \geq A_i$.  This implies that the expected utility of the optimal strategy is}
&\leq \E\left[\sum_i \left( A_i X_i - A_i  (X_i - \taumax_i)^+ \right) \right] + \E[h(\I)]  \\
&  =  \E\left[ \sum_i A_i \Ymax_i \right]  + \E[h(\I)] \\
& = \E [ \val(\I,\bYmax) ].
\end{align*}
Finally, since elements in $\I$ form a feasible set, this is at most $\E [\max_{\I \in \F} \{ \val(\I, \bYmax)\}]$.
\end{proof}

\else
We discuss the proof of Lemma~\ref{lem:boundAdapMax} (proof of Lemma~\ref{lem:boundAdapMin} is similar) in the full version. The ideas in this proof are similar to that of Kleinberg et al.~\cite{KWW-EC16} to bound the optimal  strategy for  Pandora's box.
\fi


\section{Designing an Adaptive Strategy for Utility/Disutility Optimization
} \label{sec:frugalToAdaptive}
In this section we introduce the notion of a \frugal algorithm and prove Theorem~\ref{thm:frugalToAdaptive}. We  need the following notation.
\begin{defn}[ $\Ym$ ]
For any  vector $\Y$ with indices in $V$ and any $M\subseteq V$, let $\Ym$  denote  a vector of length $|V|$ with entries $Y_j$ for $j\in M$ and a symbol $*$, otherwise.
\end{defn}


\subsection{A \frugal Algorithm} \label{section:frugalAlgoDefn}
The notion of a \frugal algorithm is similar to that of a greedy algorithm, or any other algorithm that  is not ``wasteful"---it selects elements one-by-one and irrevocably.  Its definition captures ``non-greedy" algorithms such as the  primal-dual algorithm for set cover that does not have the \emph{reverse-deletion} step.

We  define a \frugal algorithm in the packing setting. 
Consider a packing problem in the \FreeInfo world (i.e.,  $\forall i, \price_i=0$ ) where we want to find a feasible set $\I \in \F$ and $\F \subseteq 2^V$ are some downward-closed  constraints, while trying to maximize a semiadditive function 
$\val(\I,\Y) = \sum_{i\in \I} Y_i + h(\I)$.

\begin{defn}[\frugal Packing  Algorithm]\label{defn:frugal}
For a packing problem with constraints $\F$ and value function $\val$, we say Algorithm \A is \frugal if  there exists a \emph{marginal-value}  function $g(\Y,i,y):\mathbb{R}^V \times V \times \R \rightarrow \R$ that is increasing in $y$, and for which the pseudocode is given by Algorithm~\ref{alg:frugalPacking}. We note that this algorithm always returns a feasible solution if we assume $\emptyset \in \F$.
\begin{algorithm} 
\caption{\frugal Packing Algorithm \A}
\label{alg:frugalPacking}
\begin{algorithmic}[1] 
\State Start with $M=\emptyset$ and  $v_i=0$ for each element $i \in V$.
\State For each element $i\not\in M$, compute  $v_i = g( \Ym, i,Y_i)$. Let $j = \argmax_{i\not \in M ~\&~ M\cup i \in \F} \{v_i\}$. \label{alg:FrugalcomputeVi}
\State If $v_j>0$ then add $j$ into $M$ and  go to Step~\ref{alg:FrugalcomputeVi}. Otherwise, return $M$.
\end{algorithmic}
\end{algorithm}
\end{defn}
\noindent A simple example of a  \frugal packing algorithm is the greedy algorithm to find the maximum weight spanning tree (or to  maximize any weighted matroid rank  function), where $ g( \Ym, i,Y_i) = Y_i$.

\ifFULL 
We similarly define a \frugal algorithm in the covering setting. Consider a covering problem in the \FreeInfo world where we want to find a feasible set $\I \in \F'$, where $\F' \subseteq 2^V$ is some upward-closed  constraint, while trying to minimize a semiadditive function 
$\cost(\I,\Y) = \sum_{i\in \I} Y_i + h(\I)$.

\begin{defn}[\frugal Covering  Algorithm]\label{defn:frugalCovering}
For a covering problem with constraints $\F'$ and cost function $\cost$, we say Algorithm \A is \frugal if  there exists a \emph{marginal-value}  function $g(\Y,i,y):\R^V \times V \times \R \rightarrow \R$ that is increasing in $y$, and  for which  the pseudocode is given by Algorithm~\ref{alg:frugalCoverage}.
\begin{algorithm} 
\caption{\frugal Coverage Algorithm \A}
\label{alg:frugalCoverage}
\begin{algorithmic}[1] 
\State Start with $M=\emptyset$ and  $v_i=0$ for each element $i \in V$.
\State For each element $i\not\in M$, compute  $v_i = g( \Ym, i,Y_i)$. Let $j = \argmax_{i\not \in M} \{v_i\}$. \label{alg:FrugalcomputeViCoverage}
\State If $v_j>0$ then add $j$ into $M$ and  go to Step~\ref{alg:FrugalcomputeViCoverage}. Otherwise, return $M$.
\end{algorithmic}
\end{algorithm}

\noindent We note that for a covering problem it is unclear whether Algorithm~\ref{alg:frugalCoverage} returns a feasible solution as we do not appear to be looking at our covering constraints $\F'$. To overcome this, we say the marginal-value function $g$ \emph{encodes} $\F'$ if whenever $M$ is infeasible then there exists an element $i\not\in M$ with $v_i>0$. This means that the algorithm will return a feasible solution as long as $V \in \F'$.
\end{defn}

\noindent A simple example of a  \frugal covering algorithm is the greedy min-cost set cover algorithm, where $ g( \Ym, i,Y_i) = \left( {|\bigcup_{j\in M \cup i} S_j| - | \bigcup_{j\in M} S_j|} \right)/{Y_i}$. Note that here $g$ \emph{encodes} our coverage constraints.

\noindent \textbf{\emph{Remark:}} 
Observe that a  crucial difference between \frugal packing and covering  algorithms is that a \frugal packing  algorithm has to  handle   $\Y \in \mathbb{R}^V$ (i.e. some entries in $\Y$ could be negative) but a \frugal covering  algorithm has to only handle $\Y \in \R^V$. The intuition behind this difference is that unlike the disutility minimization problem, the utility maximization problem has a mixed-sign objective.

\else
In the full version we similarly define a \frugal algorithm in the covering setting. 
\fi


\subsection{Using a {\frugal} Algorithm to Design an Adaptive Strategy}
After defining the notion of a \frugal algorithm, we can now prove Theorem~\ref{thm:frugalToAdaptive} (restated below). 
\frugalToAdaptive*

We prove  Theorem~\ref{thm:frugalToAdaptive} only for the utility-maximization setting as the other proof  is similar. Lemma~\ref{lem:boundAdapMax} already gives us an upper bound on the expected utility of the optimal strategy for the utility-maximization problem in terms of the expected value of a problem in the \FreeInfo world. This \FreeInfo problem can be solved using Algorithm~\A. The main idea in the proof of this theorem is to show that if Algorithm~\A is \frugal then we can also run a modified version of \A in the \POI world and get the same expected utility.

\begin{proof}[Proof of Theorem~\ref{thm:frugalToAdaptive}]
Let $Alg(\bYmax,\A)$ denote the set $\I \in \F$ returned by Algorithm \A when it runs with element weights $\bYmax$. Since \A is an $\alpha$-approximation algorithm (where $\alpha \geq 1$), we know
\begin{align} \label{eq:AlgAGuarantee}
 \val(Alg(\bYmax,\A), \bYmax)  \geq \frac{1}{\alpha} \cdot \max_{\I \in \F} \{ \val(\I, \bYmax) \}.
\end{align}

The following crucial lemma shows that one can design an adaptive strategy in the \POI world with the same expected utility.

\begin{lemma} \label{lem:covertFrugalAlg}
If Algorithm~\A is \frugal then there exists an algorithm in the \POI world with expected utility
\[ \E_{\X} [\val(Alg(\bYmax,\A), \bYmax)].
\]
\end{lemma}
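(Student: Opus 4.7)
The plan is to construct an adaptive strategy in the \POI world that simulates the \frugal algorithm \A on the hidden surrogate vector $\bYmax$, and to show its expected utility equals $\E_\X[\val(Alg(\bYmax,\A),\bYmax)]$. At each round I maintain a selected set $M$ (whose elements are all probed) and a probed set $\probed\supseteq M$; for every feasible candidate $i\notin M$ I compute a score $\hat v_i = v_i^{\mathrm{real}} := g(\Ym,i,\Ymax_i)$ if $i$ has already been probed, and $\hat v_i = \bar v_i := g(\Ym,i,\taumax_i)$ otherwise. Let $j = \argmax_i \hat v_i$; halt if $\hat v_j \le 0$; if $j$ is unprobed then pay $\price_j$ to learn $X_j$ (and hence $\Ymax_j = \min\{X_j,\taumax_j\}$) and restart the round; if $j$ is already probed then add $j$ to $M$ and begin a new round. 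Because $g$ is monotone in its third argument and $\Ymax_i \le \taumax_i$, we have $\hat v_i \ge v_i^{\mathrm{real}}$ everywhere, so each $\hat v_i$ is an optimistic upper bound on the frugal score. Whenever the strategy adds an element $j$, that element is probed and satisfies $\hat v_j = v_j^{\mathrm{real}} \ge \hat v_i \ge v_i^{\mathrm{real}}$ for every feasible $i\notin M$, so $j$ also maximizes the true frugal score. Inductively over the rounds, the strategy's selections coincide with those of Algorithm~\ref{alg:frugalPacking} run with weights $\bYmax$, and the halting criteria line up, so the final $M$ equals $Alg(\bYmax,\A)$.

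The crucial invariant is that every element in $\probed\setminus M$ contributes zero overshoot $\Psi_i := (X_i-\taumax_i)^+ = X_i - \Ymax_i$. Indeed, between the moment $i$ is probed and the next possible addition to $M$, the vector $\Ym$ is unchanged and no other element's score is affected by the probe. If we had $X_i > \taumax_i$ then $\Ymax_i = \taumax_i$, so $v_i^{\mathrm{real}} = \bar v_i = \hat v_i > 0$ (positivity using that we only probe when $\hat v_i>0$), which remains the round's argmax, forcing the very next iteration to add $i$ to $M$ -- contradicting $i\notin M$. Hence $X_i \le \taumax_i$ and $\Psi_i = 0$ for every $i\in\probed\setminus M$, so pointwise $\sum_{i\in M}\Psi_i = \sum_{i\in\probed}\Psi_i$.

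Writing $\val(M,\X) = \val(M,\bYmax) + \sum_{i\in M}\Psi_i$ and using the invariant,
\begin{align*}
\val(M,\X) - \sum_{i\in\probed}\price_i = \val(M,\bYmax) + \sum_{i\in\probed}\Psi_i - \sum_{i\in\probed}\price_i.
\end{align*}
The strategy decides whether to probe $i$ before $X_i$ is observed, so $\mathbf{1}[i\in\probed]$ depends only on $\{X_k\}_{k\neq i}$ together with the algorithm's randomness, and is therefore independent of $X_i$; Definition~\ref{defn:tau} gives $\E[\Psi_i] = \price_i$, hence $\E[\mathbf{1}[i\in\probed]\,\Psi_i] = \Pr[i\in\probed]\,\price_i = \E[\mathbf{1}[i\in\probed]\,\price_i]$. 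Summing over $i$ cancels the last two terms in expectation, and combined with $M = Alg(\bYmax,\A)$ yields $\E[\val(M,\X) - \sum_{i\in\probed}\price_i] = \E[\val(Alg(\bYmax,\A),\bYmax)]$, proving the lemma.

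The main obstacle is the forced-addition argument of the second paragraph: one must be careful that probing alone neither alters $M$ nor $\Ym$, so the $\hat v$-values of other elements are frozen across a probe, and the only way a newly probed element can fail to be added immediately is for its own real score to drop strictly below its optimistic bound -- which requires $X_i < \taumax_i$. Consistent tie-breaking (say, breaking ties in favor of probed elements, and otherwise by index) keeps the simulation from diverging from \A in degenerate cases.
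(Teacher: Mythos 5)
Your proof is correct, and the construction is the same as the paper's Algorithm~\ref{alg:frugalToAdaptive}: use the grade $\taumax_i$ as an optimistic proxy score for any unprobed $i$, probe the argmax, and (by monotonicity of $g$ in its last argument and $\Ymax_i\le\taumax_i$) add the argmax to $M$ exactly when Algorithm~\ref{alg:frugalPacking} run on $\bYmax$ would. Your first paragraph thus matches the paper's Claim~\ref{claim:sameSolution}, including the tie-breaking caveat.

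Where you diverge is in proving the utility identity (the paper's Claim~\ref{claim:POIutility}). The paper argues locally, step by step: it splits into the cases ``argmax already probed'' and ``argmax freshly probed,'' and shows the expected increment of $\sum_{i\in M}\Ymax_i$ in the \FreeInfo world equals the expected increment of $\sum_{i\in M}X_i-\sum_{i\in\probed}\price_i$ in the \POI world by integrating against the density of $X_j$. You instead give a global telescoping argument: you first prove the pointwise invariant that every probed-but-not-selected element has zero overshoot $\Psi_i=(X_i-\taumax_i)^+$ (because if $X_i\ge\taumax_i$ then $\Ymax_i=\taumax_i$, the score is unchanged by the probe, and $i$ is immediately added), deduce $\sum_{i\in M}(X_i-\Ymax_i)=\sum_{i\in\probed}\Psi_i$ on every sample path, and then cancel $\E\bigl[\sum_{i\in\probed}(\Psi_i-\price_i)\bigr]=0$ using the independence of $\mathbf{1}[i\in\probed]$ from $X_i$ together with $\E[\Psi_i]=\price_i$. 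This is exactly the Wald-style cancellation the paper already uses to prove Lemma~\ref{lem:boundAdapMax}, so your version unifies both lemmas under a single mechanism and avoids the explicit integration; the paper's step-by-step version is a bit more self-contained since it does not reuse the decision-tree independence argument. Both are valid.
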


Before proving  Lemma~\ref{lem:covertFrugalAlg}, we finish the proof of Theorem~\ref{thm:frugalToAdaptive}. Recollect that Lemma~\ref{lem:boundAdapMax} shows that $\E[\max_{\I \in \F} \{ \val(\I, \bYmax) ]$ is an upper bound on the expected optimal utility in the \POI world. Combining  this with Lemma~\ref{lem:covertFrugalAlg} and Eq.~\eqref{eq:AlgAGuarantee} gives an $\alpha$-approximation algorithm in the \POI world.
\end{proof}

We first give some intuition for the missing Lemma~\ref{lem:covertFrugalAlg}.
The lemma is surprising because it says that  there exists an algorithm in the \POI world that has the same expected utility as Algorithm~\A in the \FreeInfo world, where there are no prices. The fact that in the \FreeInfo world Algorithm \A can only get the smaller surrogate values $\Ymax_i = \min \{X_i , \taumax_i\}$, instead of the actual value $X_i$, comes to our rescue. We show that  $\Ymax$  is defined in a manner to balance this difference in the values with the probing prices.

\begin{proof}[Proof of Lemma~\ref{lem:covertFrugalAlg}]
Since \A is \frugal, we would like to run Algorithm~\ref{alg:frugalPacking} in the \POI world. The difficulty is that we do not know $\bYmax$ values of the unprobed elements. To overcome this hurdle, consider Algorithm~\ref{alg:frugalToAdaptive} that uses the grade $\taumax$  as a proxy for $\bYmax$ values of the unprobed elements. 

\begin{claim} \label{claim:sameSolution}The set of elements returned by Algorithm~\ref{alg:frugalToAdaptive} is the same as that by Algorithm~\ref{alg:frugalPacking} running with $\Y=\bYmax$. 
\end{claim}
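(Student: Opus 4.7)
The plan is to prove the claim by induction on the number of additions to $M$, showing that Algorithm~\ref{alg:frugalToAdaptive} and Algorithm~\ref{alg:frugalPacking} (run with $\Y=\bYmax$) add exactly the same elements in the same order; equality of the final sets is then immediate. The two ingredients driving the argument are: (i) by Definition~\ref{defn:frugal}, the marginal-value function $g(\Y,i,y)$ is monotonically increasing in $y$; and (ii) by Definition~\ref{defn:Y}, $\Ymax_i=\min\{X_i,\taumax_i\}\leq \taumax_i$ for every $i$. Thus $\taumax_i$ is an optimistic proxy used by Algorithm~\ref{alg:frugalToAdaptive} for any element it has not yet probed.

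For the inductive step, suppose that immediately before the $(t{+}1)$-st addition both algorithms have the same current set $M_t$, and let $\probed_t\supseteq M_t$ denote Algorithm~\ref{alg:frugalToAdaptive}'s probed set at that moment. For each feasible candidate $i\notin M_t$, write $v_i^{(1)}:=g(\bYmax_{M_t},i,\Ymax_i)$ for the value used by Algorithm~\ref{alg:frugalPacking} and $v_i^{(2)}:=g(\bYmax_{M_t},i,\hat Y_i)$ for the value used by Algorithm~\ref{alg:frugalToAdaptive}, where $\hat Y_i=\Ymax_i$ if $i\in\probed_t$ and $\hat Y_i=\taumax_i$ otherwise. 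The first argument to $g$ is identical in both cases because $M_t\subseteq\probed_t$. Monotonicity of $g$ together with $\taumax_i\geq \Ymax_i$ yields
\[
v_i^{(2)}\;\geq\; v_i^{(1)}, \quad \text{with equality whenever } i\in\probed_t.
\]

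Algorithm~\ref{alg:frugalToAdaptive} now repeatedly selects the feasible maximizer $k$ of $v_i^{(2)}$; if $k\notin\probed_t$ it probes $k$ (which replaces $\taumax_k$ by $\Ymax_k$, thereby only decreasing $v_k^{(2)}$ while leaving every other $v_i^{(2)}$ unchanged), and only when $k\in\probed_t$ does it add $k$ to $M_t$. Since each probe enlarges $\probed_t$ by one and $V$ is finite, this inner loop terminates. If it terminates with an addition of some $k$, then $k\in\probed_t$ (so $v_k^{(2)}=v_k^{(1)}>0$) and for every other feasible candidate $i$ we have $v_k^{(1)}=v_k^{(2)}\geq v_i^{(2)}\geq v_i^{(1)}$; hence $k$ is also the feasible maximizer in Algorithm~\ref{alg:frugalPacking} with positive marginal value, so Algorithm~\ref{alg:frugalPacking} adds $k$ at iteration $t{+}1$. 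If instead the loop terminates because $\max_i v_i^{(2)}\leq 0$, then $v_i^{(1)}\leq v_i^{(2)}\leq 0$ for every feasible $i$, so Algorithm~\ref{alg:frugalPacking} also halts here. Either way, the invariant $M_{t+1}^{(1)}=M_{t+1}^{(2)}$ (or simultaneous termination) is preserved.

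The main obstacle I anticipate is handling the fact that Algorithm~\ref{alg:frugalToAdaptive} may probe several elements without adding any during a single ``round'' of Algorithm~\ref{alg:frugalPacking}, so one has to rule out that this probing causes the two algorithms to diverge. The observation that probing only decreases the proxy value of the element just probed, while leaving all other $v_i^{(2)}$ untouched, is what makes the argmax monotonically well-behaved and allows the inner loop to stabilize either at a probed maximizer (matching Algorithm~\ref{alg:frugalPacking}'s choice) or at a nonpositive maximum (matching Algorithm~\ref{alg:frugalPacking}'s termination). A minor technicality is ensuring a consistent tie-breaking rule between the two algorithms; fixing any deterministic order on $V$ (e.g., lexicographic) suffices.
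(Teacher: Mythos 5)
Your proof is correct and takes essentially the same approach as the paper's: induction over additions to $M$, driven by monotonicity of $g$ in its last argument and the fact that $\Ymax_i \leq \taumax_i$ makes $\taumax_i$ an optimistic proxy, so an unprobed element skipped by Algorithm~\ref{alg:frugalToAdaptive} would also lose to the selected element under Algorithm~\ref{alg:frugalPacking}. You flesh out the inner probe-without-adding loop and the tie-breaking technicality, which the paper's shorter proof treats only implicitly.
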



\ifFULL
\begin{proof}[Proof of Claim~\ref{claim:sameSolution}]
\noindent We prove the claim by induction on the number of elements selected by Algorithm~\ref{alg:frugalToAdaptive}. Suppose the set of elements  selected by both the algorithms into  $M$  are the same till now and  Algorithm~\ref{alg:frugalToAdaptive} decides to select  element $j$ in Step~\ref{algStep:utilMaxPickup}(a).
 This means that $j$ is already probed before this step. The only concern is that Algorithm~\ref{alg:frugalToAdaptive} selects $j$ without probing some other element $i$ based on  its  grade  $\taumax_i$. We observe that this step is consistent with Algorithm~\ref{alg:frugalPacking} because  $\Ymax_i \leq \taumax_i$ and $g( \Ymmax, i,\Ymax_i)$ is an increasing function in $\Ymax_i$, which implies  
\[ g( \Ymmax, i,\Ymax_i) \quad \leq \quad g(\Ymmax,i, \taumax_i) \quad  \leq \quad g( \Ymmax, i,\Ymax_j). \qedhere \] 
\end{proof}
\else
\noindent See proof in the full version.
\fi
\noindent An immediate corollary is that   value of Algorithm~\ref{alg:frugalToAdaptive} in the \FreeInfo world is  
\begin{align}
\E_{\X} [\val(Alg(\bYmax,\A), \bYmax)] .
\end{align}
In  Claim~\ref{claim:POIutility} we argue that this expression also gives  expected utility of Algorithm~\ref{alg:frugalToAdaptive} in the \POI world, which completes the proof of Lemma~\ref{lem:covertFrugalAlg}.
\end{proof}

\begin{claim} \label{claim:POIutility} The expected utility of Algorithm~\ref{alg:frugalToAdaptive} in \POI world is
\[ \E_{\X} [\val(Alg(\bYmax,\A), \bYmax)].
\]
\end{claim}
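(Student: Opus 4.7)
The plan is to decompose the POI utility of Algorithm~\ref{alg:frugalToAdaptive} into the surrogate value $\val(\I,\bYmax)$ plus a ``discrepancy'' term, and then show that the discrepancy has zero expectation. Recalling the definition $\Ymax_i = \min\{X_i,\taumax_i\}$, we have the pointwise identity $X_i - \Ymax_i = (X_i - \taumax_i)^+$. Since $\val$ is semiadditive with $\val(\I,\X) = \sum_{i\in \I} X_i + h(\I)$, this gives
\[
\val(\I,\X) \;-\; \sum_{i\in\probed}\price_i \;=\; \val(\I,\bYmax) \;+\; \sum_{i\in \I} (X_i-\taumax_i)^+ \;-\; \sum_{i\in \probed} \price_i.
\]
It therefore suffices to prove that $\E\bigl[\sum_{i\in \I}(X_i-\taumax_i)^+\bigr]=\E\bigl[\sum_{i\in\probed}\price_i\bigr]$.

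I would next establish the following structural property of Algorithm~\ref{alg:frugalToAdaptive}: whenever an element $i$ is probed and its revealed outcome satisfies $X_i \geq \taumax_i$, the element must be added to $\I$ in the very next iteration. The reason is that the algorithm only chose to probe $i$ because its proxy score $v_i = g(\Ymmax, i, \taumax_i)$ was the current maximum. After probing, if $X_i \geq \taumax_i$ then $\Ymax_i = \taumax_i$, so $v_i$ is unchanged; furthermore $M$ has not changed between the probe and the following iteration, so no other $v_j$ has been recomputed. Therefore $i$ is still the argmax in the next iteration of Step~\ref{algStep:utilMaxPickup}, and since it is now probed it gets selected. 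Contrapositively, every $i\in\probed\setminus \I$ satisfies $X_i < \taumax_i$, hence contributes $(X_i-\taumax_i)^+=0$. Consequently,
\[
\sum_{i\in \I}(X_i-\taumax_i)^+ \;=\; \sum_{i\in\probed}(X_i-\taumax_i)^+.
\]

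Finally, I would exploit the adaptive nature of the probing: the event $\{i\in\probed\}$ is determined by the random variables $\{X_j\}$ of previously probed elements together with the algorithm's internal coins, all of which are independent of $X_i$. Hence, conditioning on the history $\mathcal{F}_i^{-}$ just before $i$ is considered for probing,
\[
\E\bigl[\,\one_{i\in\probed}\,(X_i-\taumax_i)^+\bigr] \;=\; \E\bigl[\,\one_{i\in\probed}\,\E[(X_i-\taumax_i)^+\mid \mathcal{F}_i^-]\,\bigr] \;=\; \E[\one_{i\in\probed}]\cdot \price_i,
\]
by the defining relation $\E[(X_i-\taumax_i)^+]=\price_i$ from Definition~\ref{defn:tau}. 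Summing over $i$ and combining with the previous display yields the desired equality of expectations, completing the proof.

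The main obstacle in the above outline is the structural claim in the second paragraph: that $i\in\probed$ and $X_i\ge\taumax_i$ forces $i\in\I$. The argument I sketched leans crucially on the ``one elementary action per iteration'' structure of Algorithm~\ref{alg:frugalToAdaptive} (probe $i$, then immediately re-evaluate) together with the fact that $g(\Ymmax,i,y)$ is monotone in $y$ and depends on the other elements only through $\Ymmax$, which is unchanged by a probe. Any care must be taken with ties in the argmax and with the possibility that $v_i$ drops to $0$ after probing, but these only push the analysis in the favorable direction since such $i$ would not be selected and would automatically satisfy $X_i<\taumax_i$.
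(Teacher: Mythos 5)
Your proof is correct, and it takes a genuinely different route from the paper's. The paper proves the claim by a per-step coupling argument: it conditions on the state of Algorithm~\ref{alg:frugalToAdaptive} at each iteration and shows that the expected increment of $\sum_{i\in M}\Ymax_i$ (the \FreeInfo tally) equals the expected increment of $\sum_{i\in M} X_i$ minus the probing price paid in that step (the \POI tally), splitting into the two cases ``$j$ already probed'' and ``$j$ not yet probed.'' You instead work with the entire execution at once: you write the \POI utility minus the \FreeInfo surrogate value as a single discrepancy $\sum_{i\in\I}(X_i-\taumax_i)^+ - \sum_{i\in\probed}\price_i$, extend the first sum from $\I$ to $\probed$ using the combinatorial fact that probed-but-unselected elements have $X_i<\taumax_i$, and then kill the discrepancy in expectation via the independence of $\{i\in\probed\}$ from $X_i$ together with $\E[(X_i-\taumax_i)^+]=\price_i$. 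Your decomposition cleanly separates the two ingredients that the paper blends into each step, and is somewhat more compact; the paper's version more transparently tracks how each elementary decision of the algorithm pays for itself, which is arguably more instructive for generalizing to other \frugal algorithms. One small overcomplication in your write-up: the structural fact that a freshly probed $i$ with $X_i\ge\taumax_i$ enters $\I$ does not need the ``$v_i$ unchanged, so still argmax in the next iteration'' reasoning — Step~3(b) of Algorithm~\ref{alg:frugalToAdaptive} probes $j$ and, if $X_j\ge\taumax_j$, selects it into $M$ in the very same step, so the conclusion is built directly into the pseudocode.
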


\begin{proof}[Proof of Claim~\ref{claim:POIutility}] 
We first expand the claimed expression,
\begin{align} \label{eq:valueOfAlg}
\E_{\X} [\val(Alg(\bYmax,\A), \bYmax)] = \E\left[ \sum_{i\in Alg(\bYmax,\A)} \Ymax_i \right] + \E[h({Alg(\bYmax,\A)}) ].
\end{align}
Observe that to prove the claim we can ignore the  second term, $\E[h({Alg(\bYmax,\A)}) ]$, because it contributes the same  in both the worlds (it is only a function of the returned feasible set). We now argue that in every step of Algorithm~\ref{alg:frugalToAdaptive}  the expected change in $\sum_{i\in M} \Ymax_i$ in  the \FreeInfo world is the same as the expected increase in  $\sum_{i\in M} X_i$ minus the probing prices
 in the \POI world.  

\begin{algorithm} 
\caption{Utility-Maximization}
\label{alg:frugalToAdaptive}
\begin{algorithmic}[1] 
\State Start with $M=\emptyset$ and  $v_i=0$ for all elements $i$.
\State For each element $i\not\in M$:  \label{alg:computeVi}
\Statex (a) if $i$ is probed let $v_i = g( \Ymmax , i, \Ymax_i)$.
\Statex (b) if $i$ is unprobed let $v_i = g(\Ymmax , i, \taumax_i)$.
\State Consider the element $j = \argmax_{i\not \in M~\&~ M\cup i \in \F} \{v_i\}$ and $v_j>0$. \label{algStep:frugalToAdaptiveSelection}
\Statex (a) If $j$ is already probed then select it into $M$ and set $v_j=0$. \label{algStep:utilMaxPickup}
\Statex (b) If $j$ is not probed then probe it. If $X_j\geq \taumax_j$ then select $j$ into $M$ and set $v_j=0$.
\State If every element $i\not\in M$ has $v_i=0$ then return set $M$. Else, go to Step~\ref{alg:computeVi}.
\end{algorithmic}
\end{algorithm}

We first consider the case when the next highest element $j$  in Step~\ref{algStep:frugalToAdaptiveSelection} of Algorithm~\ref{alg:frugalToAdaptive} is already probed and has $v_j >0$. 
In this case, the algorithm selects element $j$. Since this element has been already probed before (but not selected then), it means $X_{j}< \taumax_{j}$ and $X_{j}= \Ymax_{j}$. Hence the increase in the value of the algorithm in both the worlds is $X_j$.

Next, consider the case that the next highest element $j$ in Step~\ref{algStep:frugalToAdaptiveSelection} has not been probed before. Let $\mu_{j}$ denote the probability density function of random variable $X_{j}$.
Now the expected increase in the value  in the \FreeInfo world is
\[ \taumax_{j} \cdot \Pr[X_{j} \geq \taumax_{j}] \quad  = \quad  \taumax_{j} \cdot \int_{t= \taumax_{j}}^{\infty} \mu_{j}(t) dt  .
\]
This is because the algorithm selects this element in this step only if its value is at least $\taumax_{j}$, in which case $\Ymax_{j}= \taumax_{j}$. On the other hand, the expected increase in the value in the \POI world is given by
\[ -\price_{j} + \int_{t=\taumax_{j}}^{\infty} t\cdot \mu_{j}(t) dt.
\]
This is because we pay the probing cost $\price_{j}$ and get a positive value $X_{j}$ only when $X_{j} \geq \taumax_{j}$. Now using the definition of $\taumax_{j}$, we can simplify the above equation to
\[ - \int_{t=\taumax_{j}}^{\infty} (t-\taumax_{j}) \mu_{j}(t) dt + \int_{t=\taumax_{j}}^{\infty} t\cdot \mu_{j}(t) dt \quad = \quad  \taumax_{j} \cdot \int_{t= \taumax_{j}}^{\infty} \mu_{j}(t) dt .
\]
This shows that in every step of the algorithm the expected increase in the value in both the worlds is the same, thereby proving Claim~\ref{claim:POIutility}. 
\end{proof}


\ifFULL

\section{Applications to Utility/Disutility Optimization}\label{sec:applications}
In this section we show that for several combinatorial  problems there exist \frugal algorithms. Hence we can use Theorem~\ref{thm:frugalToAdaptive}  to obtain optimal/approximation algorithms for the corresponding utility-maximization or disutility-minimization problem in \POI world. 

\subsection{Utility-Maximization}
To recollect, in the utility-maximization setting we are given a semiadditive value function $\val \geq 0$ and a packing constraint $\F$. Our goal is to probe a set of elements $\probed$ and select a feasible set $\I \subseteq \probed$ in $\F$ to maximize expected utility, 
 \[ \E\left[ \val(\I,\X) - \sum_{i\in \probed} \price_i \right].
 \]
 We   assume that $\emptyset \in \F$ and hence there always exist a solution of utility zero.

\subsubsection{$k$-System}
Let $\val(\I,\X) = \sum_{i\in \I} X_i$ be an additive function and let $\F$ denote a $k$-system constraint in this setting. To prove Theorem~\ref{thm:utilityMax}, we observe that the greedy algorithm that starts with an empty set and at every step selects the next feasible element maximum marginal-value is an $\alpha$-approximation algorithm is a \frugal algorithm as defined in Defn~\ref{defn:frugal}. We know that this greedy algorithm is a $k$-approximation for additive functions over a $k$-system in  \FreeInfo world~\cite{Jenkyns-76,KH-DM78}. Hence, Theorem~\ref{thm:frugalToAdaptive} combined with the greedy algorithm gives the $k$-system part of Theorem~\ref{thm:utilityMax} as a corollary.

\subsubsection{Knapsack}

Given a knapsack of size $B$, suppose each item $i$ has a known size $s_i$ ($\leq B$) but a stochastic value $X_i$. To find $X_i$, we have to  pay probing price $\price_i$. The goal is to probe a subset of items $\probed$ and select a subset $\I \subseteq \probed$, where $\sum_{i\in \I} s_i \leq B$, to  maximize the expected utility
\[	\E\left[ \sum_{i\in \I} X_i - \sum_{i\in \probed} \price_i \right].
\]

\noindent We can model this problem in our utility-maximization framework by taking $\val(\I,\X) = \sum_{i\in \I} X_i$ and $\F$ to contain every subset $S$ of items that fit into the knapsack. 

In the \FreeInfo world, consider a greedy algorithm that sorts items in decreasing order based on the ratio of their value and size, and then selects items greedily in this order until the knapsack is full. This greedy algorithm does not always give a constant approximation to the knapsack problem. Similarly, an algorithm that  selects only the most valuable item is not always a constant approximation algorithm (recollect that we can pick every item $i$ because  $s_i \leq B$). However, it's known that for any knapsack  instance if we randomly run one of the previous two algorithms, each w.p. half, then this is  a $2$-approximation algorithm.
 
From Theorem~\ref{thm:frugalToAdaptive}, we can simulate the greedy algorithm in the \POI world. Also, using the solution to the Pandora's box problem, we can simulate selecting the most valuable item in the \POI world. Hence, consider an algorithm that for any given knapsack problem in the \POI world, runs  either the simulated greedy algorithm or the Pandora's box solution, each with probability half. Such an algorithm is a $2$-approximation to the knapsack problem in the \POI world. 

\subsection{Disutility-Minimization}
To recollect, in the disutility-minimization setting we are given a semiadditive cost function $\cost \geq 0$ and a covering constraint $\F'$. Our goal is to probe a set of elements $\probed$ and select a feasible set $\I \subseteq \probed$ in $\F'$ to minimize expected disutility, 
 \[ \E\left[ \cost(\I,\X) + \sum_{i\in \probed} \price_i \right].
 \]

\noindent We will assume that $V \in \F$, and hence there always exists a feasible solution.

\subsubsection{Matroid Basis}

Given a matroid $\M$ of rank $r$ on $n$ elements, we consider the additive function $\cost(\I,\X) = \sum_{i\in \I} X_i$ and let $\F'$ be subsets of elements that contain a basis of $\M$.
To ensure that a feasible set of finite value exists, we make the following assumption. 
\begin{assumption}\label{assump:matroidBasisExists}
We can always extend a set $\I \in \M$ to a basis by probing and selecting items with zero penalty but with large probing cost $\price_0$. Thus it incurs an additional penalty of $(r -|rank(\I)|)\cdot \price_0$.
\end{assumption}

To use Theorem~\ref{thm:frugalToAdaptive}, we notice that the greedy algorithm that always selects the minimum cost independent element is \frugal. This is true because we choose marginal-value function $g$ to be the reciprocal of the weight of an element in Defn~\ref{defn:frugalCovering}. Now since the greedy algorithm is optimal for min-cost matroid basis, this proves the first part of Theorem~\ref{thm:disutilityMinim}.


\subsubsection{Set Cover}
Consider a problem where we are given sets $S_1, \ldots, S_m \subseteq V$ that have some unknown stochastic costs $X_i$. The goal is to select a set cover with minimum disutility, which is the sum of the set cover solution costs and the probing prices. We can model this problem in our framework by considering a the additive function $\cost(\I,\X) = \sum_{i\in \I} X_i$ and $\F'$ be set covers of $V$.

 To ensure that the solution is always bounded, we make the following assumption. 
\begin{assumption}\label{assump:setCoverExists}
There exists $ S_0 = [n]$ that covers all  elements and has  $X_0=0$, but a finite large  $\price_0$.
\end{assumption}

To prove the set cover part of Theorem~\ref{thm:disutilityMinim}, we first notice that  the classical $O(\log |V|)$ greedy algorithm for the min-cost set cover problem is \frugal. This is because the marginal-value function $g( \Ym, i, Y_i)$ in Defn~\ref{defn:frugalCovering} is equal to ${\left({|\bigcup_{j\in M \cup i} S_j|  - |\bigcup_{j\in M} S_j|}\right)}/{Y_i}$.

Next we give an $f$-approximation algorithm, where $f$ is the maximum number of sets in which an element can appear. We observe that the primal-dual $f$-approximation algorithm (see pseudocode in Algorithm~\ref{alg:setcoverPrimalDual}) for the min-cost set cover~\cite{BE-JAL81,WS-Book11} is also \frugal. This is because we can encode the information about the order $\sigma$ and the dual variables $y_j$ for $j\in M$ in  the  marginal-value function $g( \Ym, i,Y_i)$ in Defn~\ref{defn:frugalCovering}.

\begin{algorithm} 
\caption{Primal-dual algorithm for min-cost set cover}
\label{alg:setcoverPrimalDual}
\begin{algorithmic}[1] 
\State Fix an order $\sigma$ on the ground elements. Start with $M=\emptyset$ and  $y_j=0$ for every ground element $j$.
\State Select the next element $j \not \in \bigcup_{i\in M} S_i$ according to $\sigma$ and raise its dual variable $y_j$ until  some set $i$ becomes tight, i.e., $\sum_{j\in S_i} y_j = Y_i$.
\label{algStep:primalDual}
\State Select every tight set into $M$.
\State If every ground element is covered in $\bigcup_{i\in M} S_i$ then return $M$, else go to Step~\ref{algStep:primalDual}.
\end{algorithmic}
\end{algorithm}


\subsubsection{Uncapacitated Facility Location}
Consider an uncapacitated facility location problem where we are given a graph $G=(V,E)$ with metric $(V,d)$ and  $\Clients\subseteq V$, however, facility opening costs $X_i$ for $i\in V$ are stochastic and can be found by paying a probing price $\price_i$. The goal is to probe a set of facility locations $\probed \subseteq V$ and open a non-empty subset $ I \subseteq \probed $  to minimize expected  disutility 
\[ 	\E\left[ \sum_{u\in \Clients} d(u,\I) + \sum_{i\in \I } X_i + \sum_{i\in \probed} \price_i \right],
\]
where $d(u,\I) = \min_{i\in \I} d(u,i)$.

We model the above problem in our framework by defining exponential number of elements that are indexed by $(i,S)$, for $i \in V$ and $S\subseteq \Clients$, which denotes that facility $i$ will serve clients $S$. Any subset of elements, say $\I = \{(i_1,S_1),(i_2,S_2), \ldots \}$, is feasible if the union of their clients covers $\Clients$. The semiadditive $\cost(\I,\X)$ is given by $\sum_{(i,S)\in \I} \left( X_i + \sum_{j\in S} d(i,j) \right)$.

We notice that the $1.861$-approximation greedy algorithm of  Jain et al.~\cite{JMMSV-JACM03}  for the  uncapacitated facility location problem is \frugal. In each step, their algorithm selects the next best element with minimum cost per client, where already opened facilities now have zero opening costs. The reciprocal of this value gives the marginal-value function $g$.
Hence, we can use Theorem~\ref{thm:frugalToAdaptive} to obtain a  $1.861$-approximation strategy. 

\subsubsection{Prize-Collecting Steiner Tree}

Consider a Prize-Collecting Steiner tree problem (PCST)  where we are given a graph $G=(V,E)$ with some edge costs $\cc :E \rightarrow \R $,  a root node $r\in V$, and probability distributions on the independent penalties $X_i$ for $i\in V$. The stochastic penalties $X_i$ can be found by paying a probing price $\price_i$. 
The goal is to probe a set of nodes $\probed \subseteq V\setminus \{r\}$ and select a subset $ I \subseteq \probed $  to minimize expected  disutility, 
\[ \E \left[ \sum_{i\in \I} X_i + \text{Min-Steiner-Tree}(V\setminus \I) +\sum_{i\in \probed} \price_i \right], 
\]
where $\text{Min-Steiner-Tree}(V\setminus \I)$ denotes the minimum cost Steiner tree connecting all nodes in $V\setminus \I$ to $r$.

As discussed in Section~\ref{sec:modelResults}, we can model the PCST  in our disutility-minimization framework by noticing that the function $\cost(\X,\I) = \sum_{i\in \I} X_i + \text{Min-Steiner-Tree}(V\setminus \I)$ is semiadditive. We show that although the $2$-approximation Goemans-Williamson~\cite{GW-SICOMP95}   algorithm (hereafter, GW-algorithm) for PCST  in the \FreeInfo world is not \frugal, it can be modified to obtain a $3$-approximation \frugal algorithm for PCST. Combining this with Theorem~\ref{thm:frugalToAdaptive} gives a $3$-approximation algorithm for PCST in the \POI world.

We quickly recollect the $2$-approximation primal-dual GW-algorithm. (We do not repeat their  proof and refer  to~\cite[Chapter 14]{WS-Book11} for  details.) Their algorithm starts by making each node $i \in V\setminus \{r\}$ active with initial charge $p({\{i\}})  = X_i$. At any time, the algorithm grows a \emph{moat} around each active component $C$ and discharges $C$ at the same rate.  If a component $C$ runs out of charge, we make it inactive and mark every unlabeled node in the component with label $C$. If an edge $e$ becomes \emph{tight}, we pick $e$, merge the two components $C,C'$ connected by $e$, make both $C,C'$ inactive, and make $C\cup C'$ active with an initial charge of $p(C) + p(C')$. Any component that hits the component containing $r$ is made inactive. In the \emph{cleanup phase} we remove all edges that do not disconnect an unmarked node from $r$, while ensuring that if a component with label $C$ is connected to $r$ then every node with label $C' \supseteq C$ is also connected to $r$.

We first observe that the  GW-algorithm is not  \frugal. This is because whenever a node $i$ is labeled with a component $C \ni i$, the algorithm looks at the penalty $X_i$; however, the decision of whether to select $i$ into $\I$ (i.e., not connecting $i$ to $r$)  is not made until the cleanup phase. The reason is  that  some other active component $C'$ might later come and merge with $C$, and eventually connect $i$ to $r$. To fix this, we modify this  algorithm to make it \frugal. The idea is to immediately include the labeled vertices into $\I$.

Consider an algorithm that creates the same tree as the GW-algorithm; however, any node that ever gets labeled during the run of the algorithm is imagined to be included  into $\I$. This means that although our final tree might connect a labeled node $i$ to $r$, our algorithm still pays its penalty $X_i$. We argue that these additional penalties are at most the optimal PCST solution in the \FreeInfo world, which gives us a $3$-approximation \frugal algorithm. 

Finally, to argue that the additional penalties are not large, consider the state of the GW-algorithm before the cleanup phase. Let $\C$ denote the set of maximal inactive components. Clearly, each node $i$ that was every labeled belongs to some maximal tight component $C \in \C$. Hence, the sum of the additional penalties is upper bounded by $\sum_{ C \in \C} \sum_{i\in C} X_i$. 
Since each   component $C \in \C$ is tight, we know $\sum_{ C \in \C} \sum_{i\in C} X_i = \sum_{C \in \C} \sum_{S\subseteq C} y_S$, where $y_S$ are the dual variables corresponding to the moats. Since the dual solutions form a feasible dual-solution, they are a lower bound on the optimal solution for the problem. This proves that the additional penalty paid by our \frugal algorithm  in comparison to GW-algorithm is at most the optimal solution.

\subsubsection{Feedback Vertex Set}
Given an undirected graph $G=(V,E)$, suppose each node $i\in V$ has a stochastic weight $X_i$, which we can find by probing and paying price $\price_i$. The problem is to probe a set $\probed \subseteq V$ and select a subset $\I \subseteq \probed$ s.t. the induced graph $G[V\setminus \I]$ contains no cycle, while minimizing the expected disutility
\[ \E\left[ \sum_{i\in \I} X_i + \sum_{i\in \probed} \price_i \right].
\]

The above problem can be modeled in our framework by considering the additive function $\cost(\I,\X) = \sum_{i\in \I} X_i$ and $\F'$ contains a set of nodes $S$ if $G[V\setminus S]$ has no cycle. Becker and Geiger~\cite{BG-AI96} showed that the greedy Algorithm~\ref{alg:feedVertexSet} is an 
$O(\log n)$-approximation algorithm for the feedback vertex set problem in the \FreeInfo world. Since this algorithm is  \frugal, using Theorem~\ref{thm:frugalToAdaptive} we  get  an $O(\log n)$-approximation algorithm for minimizing disutility for the feedback vertex set problem in the \POI world.

\begin{algorithm} 
\caption{Greedy Algorithm for Feedback Vertex Set}
\label{alg:feedVertexSet}

\IGNORE{\begin{algorithmic}[1] 
\State While $G$ is not empty
\State Repeatedly remove degree $0$ or degree $1$ vertices from $G$, and let $G$ be the induced subgraph on the remaining vertices.
\State 
\end{algorithmic}
\end{algorithm}}

\begin{algorithmic}[1] 
\State Start with $R=M=\emptyset$ and  $v_i=0$ for each element $i \in V$.
\State  While $\exists i \in V\setminus (R\cup M)$ s.t. degree of $i$ in $G[V\setminus (R\cup M)]$ is $0$ or $1$, add $i$ to $R$. \label{algstep:feedVertSet}
\State For each element $i\not\in R\cup M$, compute  $v_i = degree(i,G[V\setminus (R\cup M)])/w(i)$, where $degree(i,G)$ is the degree of vertex $i$ in $G$ and $w(i)$ is the weight of vertex $i$.
\State  Let $j = \argmax_{i\not \in R\cup M} \{v_i\}$. Add $j$ to $M$. 
\State If $R \cup M \neq V$, go to Step~\ref{algstep:feedVertSet}. Otherwise, return $M$.
\end{algorithmic}
\end{algorithm}

\noindent  \textbf{\emph{Remark:}}
The  $O(\log n)$-approximation   primal-dual algorithm in~\cite{BGNR-SICOMP98} (or in Chapter 7.2 of~\cite{WS-Book11}) can be also shown to be \frugal. This gives  another $O(\log n)$-approximation algorithm for minimizing disutility for feedback vertex set problem.

\IGNORE{
\subsubsection{Min-Cut}

Given an undirected graph $G=(V,E)$, a source node $s\in V$, and a sink node $t \in V$. Suppose each edge $i\in E$  has a stochastic cost $X_i$,  which we can find by probing and paying price $\price_i$. The problem is to probe a set $\probed \subseteq E$ and select a subset $\I \subseteq \probed$ s.t. the  graph $G(V,E\setminus \I)$ contains no path from $s$ to $t$, while minimizing the expected disutility
\[ \E\left[ \sum_{i\in \I} X_i + \sum_{i\in \probed} \price_i \right].
\]

The above problem can be modeled in our framework by considering the additive function $\cost(\I,\X) = \sum_{i\in \I} X_i$ and $\F'$ contains a set of edges $S$ if $G(V,E\setminus S)$ contains no path from $s$ to $t$.

}

\IGNORE{
\subsection{Shortest Path Tree}
Given a graph $G=(V,R)$, a root $r\in V$, and stochastic edge weights $X_i$ that we can learn by probing and paying cost $c_i$. The goal is to probe a set $S\subseteq E$ of edges and  find a tree $T \subseteq S$ rooted at $r$ s.t. for every vertex $u\in V$, length of path $r-u$ in $T$ gives the shortest path in $G$ with edge weights $X_i$, while minimizing $\sum_{i\in \probed} c_i$.

\begin{theorem} The greedy algorithm  can be modified to give an optimal algorithm for minimizing disutility for finding shortest path tree.
\end{theorem}
}
\else
\fi


\section{Constrained Utility-Maximization}\label{sec:extensions}

In this section we consider a generalization of the Pandora's box  problem where we have an additional constraint that allows us to only probe a subset of elements  $\probed \subseteq V$ that belongs to a downward-closed constraint $\J$ (e.g., a cardinality constraint allowing us to probe at most $k$ elements). We restate our main result for the constrained  utility-maximization problem (see   problem definition in Section~\ref{sec:introConstrainedUtilMax}).
\constrainedProbing*

Similar to Section~\ref{sec:boundingOptStrategy}, our strategy to prove Theorem~\ref{thm:constrainedUM} is to bound the constrained utility-maximization problem in the \POI world (a mixed-sign objective function) with a surrogate constrained utility-maximization problem in the \FreeInfo world (i.e., where $\price_i=0$ for all $i\in V$). This latter problem turns out 
to be the same as the \emph{stochastic probing} problem, which we define below in the form that is relevant to this paper.

\noindent \textbf{Stochastic Probing}  ~ 
Given downward-closed probing constraints  $\J \subseteq 2^V$ and probability distributions of independent non-negative variables $Y_i$ for $i \in V$, the stochastic probing problem is to \emph{adaptively} probe a subset $\probed \in \J$ to maximize the expected value $\E[\max_{i\in \probed} \{Y_i\}]$. Here, \emph{adaptively} means that the decision to probe which  element next can depend on the outcomes of the already probed elements.


\subsection{Reducing Constrained Utility-Maximization to Non-adaptive Stochastic Probing}\label{section:reducingConstUtilMaxToNonAdap}

The following lemma  bounds the expected utility of the constrained utility-maximization problem in the \POI world by the expected value of a stochastic probing problem in the \FreeInfo world.

\begin{restatable}{lemma}{boundOptConstrained} \label{lem:boundOptConstrained}
The expected utility of the optimal strategy for the constrained utility-maximization problem is  at  most the expected value of the optimal adaptive strategy for a  stochastic probing problem with the same constraints $\J$ and where the  random variables $Y_i$ for  $i\in V$ have probability distributions  $\Ymax_i$ (recollect, Defn~\ref{defn:Y}).
\end{restatable}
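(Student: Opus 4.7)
The plan is to mimic the charging argument used in the proof of Lemma~\ref{lem:boundAdapMax}, adapted to the $\max$ objective, and then convert the resulting bound into a statement about the optimal adaptive stochastic probing value via a coupling argument.

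First I would fix an arbitrary adaptive POI strategy that probes a (random) set $\probed \in \J$, and let $\one_i$ be the indicator that element $i$ is probed. Since the probing decision for $i$ depends only on the observed values of other elements and since the $X_j$'s are independent, $\one_i$ is independent of $X_i$. Hence, by the definition of $\taumax_i$,
\[
\E\!\left[\sum_i \one_i \, \pi_i\right]
= \sum_i \E[\one_i]\,\E[(X_i - \taumax_i)^+]
= \E\!\left[\sum_i \one_i \, (X_i - \taumax_i)^+\right].
\]
Therefore the expected utility equals $\E\!\big[\max_{i\in \probed} X_i - \sum_i \one_i (X_i-\taumax_i)^+\big]$.

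Next I would carry out the pointwise bound. Let $i^\star := \argmax_{i\in \probed}X_i$; since $i^\star\in \probed$ we have $\one_{i^\star}=1$, and so (dropping all other nonnegative terms in the sum)
\[
\max_{i\in \probed}X_i - \sum_i \one_i(X_i-\taumax_i)^+
\;\leq\; X_{i^\star} - (X_{i^\star}-\taumax_{i^\star})^+
\;=\; \min\{X_{i^\star},\taumax_{i^\star}\}
\;=\; \Ymax_{i^\star}
\;\leq\; \max_{i\in \probed}\Ymax_i.
\]
Taking expectations gives POI-utility $\leq \E[\max_{i\in \probed}\Ymax_i]$, where $\probed$ is the set probed by the fixed POI strategy. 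This mirrors Lemma~\ref{lem:boundAdapMax}; the use of the max function (rather than a general semiadditive function) is what lets us keep only the single term $\one_{i^\star}$.

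The remaining step—likely the main obstacle—is to argue that the right-hand side $\E[\max_{i\in \probed}\Ymax_i]$ is at most the value of the optimal \emph{adaptive} stochastic probing strategy with variables $Y_i := \Ymax_i$ over the same constraint $\J$. The difficulty is that the POI strategy may base its probing decisions on fine-grained information in $X_i$ that is \emph{not} recoverable from $Y_i=\min\{X_i,\taumax_i\}$ (specifically, when $X_i>\taumax_i$, only the truncation value is visible to a $Y$-observer). I would handle this via a coupling: fix the natural joint distribution in which $Y_i=\min\{X_i,\taumax_i\}$, and construct an adaptive stochastic probing strategy $\sigma'$ as follows. Each time $\sigma'$ probes an element $i$ and observes $Y_i$, it draws a private sample $\tilde X_i$ from the conditional distribution of $X_i$ given $Y_i$, and then imitates the POI strategy as if it had observed $\tilde X_i$. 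Because the $(X_i,Y_i)$ pairs are independent across $i$ and the private samples are drawn from the correct conditionals, the joint distribution of the probed set $\probed$ and the observed $Y$-values under $\sigma'$ is identical to the joint distribution of the probed set and the $\Ymax$-values under the original POI strategy. In particular $\sigma'$ probes a feasible set in $\J$ and achieves expected value exactly $\E[\max_{i\in\probed}\Ymax_i]$, which is therefore at most the optimum of the stochastic probing problem. Chaining the two inequalities completes the proof.
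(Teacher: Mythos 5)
Your proof is correct and follows essentially the same route as the paper's: charge each probing price $\pi_i$ to $\E[(X_i-\taumax_i)^+]$ using independence of $\one_i$ and $X_i$, and then bound pointwise. The one cosmetic difference is that you pick $i^\star=\argmax_{i\in\probed}X_i$ and discard the other nonnegative terms, whereas the paper introduces selection indicators $A_\leaf(i)\le\one_{i\in P_\leaf}$ with $\sum_i A_\leaf(i)\le 1$; these are two phrasings of the same inequality. The one substantive addition is your explicit coupling argument at the end: the paper simply says ``consider the adaptive strategy that follows the same decision tree $T$,'' which is imprecise because a stochastic-probing strategy sees only $Y_i=\min\{X_i,\taumax_i\}$ and thus cannot literally replay branches of $T$ that distinguish among values $X_i>\taumax_i$. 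Your construction---resample $\tilde X_i$ from the conditional law of $X_i$ given $Y_i$ and feed $\tilde X_i$ into $T$---is the right way to make that step rigorous, and it preserves the joint law of the probed set and the observed $Y$-values as you claim, so your version is a small but genuine improvement in precision over the paper's exposition.
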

\ifFULL
\begin{proof}[Proof of Lemma~\ref{lem:boundOptConstrained}]
We start by noticing that the optimal strategy for our problem is given by a decision tree $T$ with leaves $\leaf$. For any root leaf path $P_{\leaf}$, the value of the optimal strategy is $\max_{i \in P_{\leaf}} \{X_i\} - \price(P_{\leaf})$. Thus the expected value of the optimal strategy is
\begin{align} \label{eq:optstrat}
		\E_{\leaf} \left[ \max_{i \in P_{\leaf}} \{X_i\} - \price(P_{\leaf}) \right] .
\end{align}
Now we design an adaptive strategy for the stochastic probing problem on random variables $\Ymax_i$ with  expected value at least  as given by Eq.~\eqref{eq:optstrat}. Consider the adaptive strategy that follows the same  decision tree $T$ (note, it pays no probing price). The expected value of such an adaptive strategy is given by
\begin{align} \label{eq:adapstrat}
		\E_{\leaf} \left[ \max_{i \in P_{\leaf} } \{ \Ymax_i \} \right] .
\end{align}

The following claim finishes the proof of this lemma.
\begin{claim}
Eq.~\eqref{eq:optstrat} $\leq $ Eq.~\eqref{eq:adapstrat}.
\end{claim}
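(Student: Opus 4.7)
The plan is to compare the two expressions leaf-by-leaf (for a fixed realization $\X$ and hence a fixed leaf $\leaf$), using the pointwise identity $\Ymax_i = X_i - (X_i - \taumax_i)^+$ that comes straight from Defn~\ref{defn:Y}. Rearranging this identity gives $X_i \le \Ymax_i + (X_i - \taumax_i)^+$ for every $i$, so taking a max over any finite set $P_{\leaf}$ yields
\[
\max_{i \in P_{\leaf}} \{X_i\} \;\le\; \max_{i \in P_{\leaf}} \{\Ymax_i\} \;+\; \sum_{i \in P_{\leaf}} (X_i - \taumax_i)^+.
\]
This is the only ``algebraic'' step; it is analogous to the one used in Lemma~\ref{lem:boundAdapMax}, where the ability to select $\I \subseteq \probed$ is replaced here by the fact that the max is attained at a single index of $P_{\leaf}$.

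Next I would take expectations of both sides and show that the penalty term on the right equals $\E_{\leaf}[\price(P_{\leaf})]$, which cancels the $-\price(P_{\leaf})$ on the left side of Eq.~\eqref{eq:optstrat}. Concretely,
\[
\E\Bigl[\sum_{i \in P_{\leaf}} (X_i - \taumax_i)^+\Bigr] \;=\; \sum_{i \in V} \Pr[i \in P_{\leaf}]\cdot \E\bigl[(X_i - \taumax_i)^+ \;\bigm|\; i \in P_{\leaf}\bigr].
\]
The crucial observation, to be justified next, is that the event $\{i \in P_{\leaf}\}$ is determined by the outcomes of the $X_j$ probed \emph{before} $i$ on the root-to-leaf path, and so by independence of the $X_i$'s it is independent of $X_i$ itself. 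Thus the conditional expectation equals $\E[(X_i-\taumax_i)^+] = \price_i$ by Defn~\ref{defn:tau}, and the whole term reduces to $\sum_i \Pr[i\in P_{\leaf}]\cdot \price_i = \E_{\leaf}[\price(P_{\leaf})]$.

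Putting the two pieces together gives
\[
\E_{\leaf}\Bigl[\max_{i\in P_{\leaf}}\{X_i\} - \price(P_{\leaf})\Bigr] \;\le\; \E_{\leaf}\Bigl[\max_{i\in P_{\leaf}}\{\Ymax_i\}\Bigr],
\]
which is exactly the claim. The only subtle point I expect to need care with is the independence argument in the middle step: one has to be precise about what ``$i \in P_{\leaf}$'' means for the optimal decision tree $T$, and verify that this event is measurable with respect to $\{X_j : j \neq i\}$ (because whether the algorithm ever gets around to probing $i$ is decided using only outcomes of earlier probes, and $X_i$ is revealed only \emph{after} such a decision). Everything else is a short expectation calculation plugging in the defining equation $\E[(X_i-\taumax_i)^+]=\price_i$.
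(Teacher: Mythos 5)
Your proposal is correct and takes essentially the same approach as the paper: the pointwise identity $X_i = \Ymax_i + (X_i-\taumax_i)^+$, followed by the independence argument that the event $\{i\in P_\leaf\}$ is determined by earlier probes and hence independent of $X_i$, so that the penalty term integrates to $\E_\leaf[\price(P_\leaf)]$. The paper organizes the same steps with indicator variables $A_\leaf(i)$ (selection) and $\one_{i\in P_\leaf}$ (probing) and uses $A_\leaf(i)\le\one_{i\in P_\leaf}$ and $\sum_i A_\leaf(i)\le 1$ where you invoke the max-inequality and nonnegativity of $(X_i-\taumax_i)^+$, but these are the same facts phrased differently.
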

\proof{
 Let $A_{\leaf}(i)$ and $\one_{i\in P_{\leaf}}$  denote  indicator variables that element $i$ is selected and  probed on a root-leaf path $P_{\leaf}$ of the optimal strategy, respectively. Note that these indicator variables are correlated. The expected utility of the optimal strategy equals
\begin{align*}
\E_{\leaf} \left[ \max_{i \in P_{\leaf} } \{X_i\} - \price(P_{\leaf}) \right] &= \E_{\leaf} \left[\sum_i \left( A_{\leaf}(i) X_i - \one_{i\in P_{\leaf}} \price_i \right) \right] \\
&= \E_{\leaf} \left[\sum_i \left( A_{\leaf}(i) X_i - \one_{i\in P_{\leaf}} \E_i[ (X_i - \taumax_i)^+] \right) \right] \\
&= \E_{\leaf} \left[\sum_i \left( A_{\leaf}(i) X_i - \one_{i\in P_{\leaf}}  (X_i - \taumax_i)^+ \right) \right] & \\
\intertext{since $X_i$ is independent of $\one_{i\in P_{\leaf}}$. Now using $A_{\leaf}(i) \leq \one_{i\in P_{\leaf}}$,}
&\leq \E_{\leaf} \left[\sum_i \left( A_{\leaf}(i) X_i - A_{\leaf}(i)  (X_i - \taumax_i)^+ \right) \right] & \\
&= \E_{\leaf} \left[ \sum_i A_{\leaf}(i) \Ymax_i \right]  \\
& \leq  \E_{\leaf} \left[ \max_{i \in P_{\leaf} } \{ \Ymax_i \} \right], & 
\end{align*} 
where the last inequality uses $\sum_{i\in P_{\leaf}} A_{\leaf}(i) \leq 1$.
}

\end{proof}
\else
We present the proof of this lemma in the full version. 
\fi
The following Lemma~\ref{lem:constrainedBoundAdapGap} shows  that we can further simplify the stochastic probing problem in the \FreeInfo world by  focusing only on finding the best \emph{non-adaptive} strategy for this problem, i.e. the problem of finding $\argmax_{\probed \in \J} \{\max_{i\in \probed} \{ Y_i\} \}$. This is because the \emph{adaptivity gap}---ratio of the expected values of the optimal adaptive and optimal non-adaptive strategies---for the stochastic probing problem is small.

\begin{lemma}\label{lem:constrainedBoundAdapGap} 
The adaptivity gap for the stochastic probing problem  is at most $3$.
\end{lemma}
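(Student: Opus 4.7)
The plan is to reduce the claim to the adaptivity gap of Gupta--Nagarajan--Singla \cite{GNS-SODA17} for stochastic probing with monotone submodular objectives over downward-closed constraints, which gives a constant bound of $3$.

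First, I would observe that for any fixed realization vector $\Y=(Y_1,\ldots,Y_n)$ the function $g(S)=\max_{i\in S} Y_i$ (with $g(\emptyset)=0$) is monotone submodular in $S$, so $\E_{\Y}[\max_{i\in S} Y_i]$ is a monotone submodular set function of $S$. Next, I would express the max as a threshold integral: for each $\tau\geq 0$ set $Z_i^{(\tau)}=\I[Y_i\geq \tau]$ and $f_\tau(S,\Z^{(\tau)})=1-\prod_{i\in S}(1-Z_i^{(\tau)})$. Then
\[
\max_{i\in S} Y_i \;=\; \int_0^{\infty} f_\tau(S,\Z^{(\tau)})\,d\tau,
\]
and for any random (possibly adaptive) probed set $\probed$, Fubini gives
\[
\E\!\left[\max_{i\in \probed}\{Y_i\}\right] \;=\; \int_0^{\infty} \E\!\left[f_\tau(\probed,\Z^{(\tau)})\right] d\tau.
\]

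Second, given the optimal adaptive decision tree $T^{\ast}$ with random probed set $\probed^{\mathrm{adap}}$, I would construct a single non-adaptive distribution $\probed^{\mathrm{na}}$ over sets in $\J$ using the Gupta--Nagarajan--Singla surrogate: simulate $T^{\ast}$, but when visiting element $i$ replace the observed outcome by an independent fresh draw from its prior distribution, and output the set of elements visited along this simulated path. Since the probing constraint $\J$ is downward-closed, $\probed^{\mathrm{na}}\in\J$ almost surely. For any fixed threshold $\tau$, projecting $T^{\ast}$ onto the Bernoulli variables $Z_i^{(\tau)}$ yields a valid adaptive strategy for the submodular objective $f_\tau$, and the surrogate construction yields the very same non-adaptive distribution $\probed^{\mathrm{na}}$ regardless of $\tau$. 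The key step is to apply the pointwise adaptivity-gap analysis of \cite{GNS-SODA17} for monotone submodular objectives over downward-closed constraints, which gives
\[
\E\!\left[f_\tau(\probed^{\mathrm{na}},\Z^{(\tau)})\right] \;\geq\; \tfrac{1}{3}\, \E\!\left[f_\tau(\probed^{\mathrm{adap}},\Z^{(\tau)})\right]
\quad \text{for every } \tau\geq 0.
\]

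Third, I would integrate this pointwise inequality over $\tau$ using the earlier identity to obtain
\[
\E\!\left[\max_{i\in \probed^{\mathrm{na}}}\{Y_i\}\right] \;\geq\; \tfrac{1}{3}\, \E\!\left[\max_{i\in \probed^{\mathrm{adap}}}\{Y_i\}\right],
\]
which proves the lemma, since $\probed^{\mathrm{na}}$ is non-adaptive.

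The main obstacle is the requirement that the \emph{same} non-adaptive strategy $\probed^{\mathrm{na}}$ works simultaneously for all thresholds $\tau$: one cannot simply invoke a Bernoulli adaptivity-gap theorem per $\tau$ and take the best strategy at each level. The reason this works is that the surrogate construction of \cite{GNS-SODA17} depends only on $T^{\ast}$ and the priors of the $Y_i$'s, not on the realization, so $\probed^{\mathrm{na}}$ is defined once and for all; the submodular adaptivity-gap analysis is then a purely structural comparison between probe-set distributions, which we apply to the coverage function $f_\tau$ level-by-level and integrate. A minor technical check is that the \cite{GNS-SODA17} argument, stated for prefix-closed constraints, applies to downward-closed $\J$; this is automatic since any downward-closed family is prefix-closed.
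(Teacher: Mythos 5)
Your proof is correct, but it takes a genuinely different route from the paper's. The paper (Section~\ref{sec:constrainedBoundAdapGap}) first reduces to Bernoulli \emph{random variables} via a surgery on the decision tree (Lemma~\ref{lem:adapGapToBern}): each node probing a discrete $Y_i$ with $m$ possible outcomes is replaced by a chain of up to $m$ Bernoulli nodes, where the $j$-th is active with the conditional probability $p_j/(1-\sum_{k<j}p_k)$, yielding a genuine Bernoulli decision tree to which the GNS bound (Lemma~\ref{lem:GNSadapGap}) is applied directly. You instead leave the tree intact and Bernoulli-ize the \emph{objective}: you write $\max_{i\in S} Y_i = \int_0^\infty f_\tau\,d\tau$ as an integral of thresholded coverage functions over Bernoulli indicators $Z_i^{(\tau)}$, observe that the GNS surrogate distribution (walk $T^\ast$ with fresh draws from the priors) is a single $\tau$-independent non-adaptive policy, apply the GNS per-tree bound at each level, and integrate. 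Both proofs bottom out in the same GNS result for Bernoulli variables over prefix-closed constraints, but they discretize along orthogonal axes. A side benefit of your route is that it handles continuous $Y_i$ directly, whereas the paper's surgery as written assumes the $Y_i$ are discretely supported.

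One imprecision worth fixing: ``projecting $T^\ast$ onto the Bernoulli variables $Z_i^{(\tau)}$'' does not literally yield a (deterministic) Bernoulli decision tree, since $T^\ast$ branches on the full realization of $Y_i$, which is strictly more information than $Z_i^{(\tau)}$. What you actually obtain is a \emph{randomized} Bernoulli decision tree: at node $i$, observe $Z_i^{(\tau)}$, then draw an independent sample from $Y_i$ conditioned on $Z_i^{(\tau)}$ as internal coins, and branch according to $T^\ast$. The GNS per-tree surrogate bound extends to randomized trees by conditioning on the internal coins, and — as you correctly exploit — the GNS surrogate for this randomized tree is precisely the distribution obtained by walking $T^\ast$ with fresh draws of $Y_i$, which is the same $\probed^{\mathrm{na}}$ for every $\tau$. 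You should state this explicitly; as phrased, the ``projection'' step could be misread as producing a deterministic Bernoulli tree, which it does not. With that clarification, the argument is sound.
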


We prove Lemma~\ref{lem:constrainedBoundAdapGap} in \ifFULL Section~\ref{sec:constrainedBoundAdapGap}. \else the full version by generalizing a similar result for Bernoulli random variables of Gupta et al.~\cite{GNS-SODA17} to functions that are given by weighted matroid rank function. \fi It  tells us about the  existence of a feasible set $S \in \J$ such that $\E[\max_{i \in S} \{ \Ymax_i \}]$ is at least $1/3$ times the optimal adaptive strategy for the stochastic probing problem. Suppose we have an oracle to (approximately) find this feasible set $S$ for probing constraints $\J$.

\begin{assumption}\label{assum:oracleNonAdaptive} Suppose there exists an oracle that finds $S \in \J$ that $\beta$ approximately maximizes  the non-adaptive stochastic probing solution.
\end{assumption}
\noindent The above assumption is justifiable as it is a constrained submodular maximization problem that we know how to approximately solve for some many constraint families $\J$, e.g., an $\ell$-system.
\begin{lemma}[Greedy Algorithm~\cite{FNW2-Journal78,CCPV-SICOMP11}] \label{lem:LSystemGreedy} The greedy algorithm has an $(\ell+1)$-approximation for monotone submodular maximization over an $\ell$-system.
\end{lemma}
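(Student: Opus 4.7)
The plan is to show $f(O) \le (\ell+1)\, f(G)$ where $O$ is an optimal feasible set and $G = \{g_1, g_2, \ldots, g_k\}$ is the greedy solution listed in the order of selection. Let $G_i := \{g_1, \ldots, g_i\}$, $G_0 := \emptyset$, and $\Delta_i := f(G_i) - f(G_{i-1}) \ge 0$. By monotonicity, $f(O) \le f(O \cup G)$, so it suffices to bound $f(O \cup G) - f(G)$ by $\ell\, f(G)$. Expanding $f(O \cup G) - f(G)$ as a telescoping sum along an arbitrary ordering of $O \setminus G$ and applying submodularity term by term gives the standard reduction
\[
f(O \cup G) - f(G) \;\le\; \sum_{o \in O \setminus G} \bigl[f(G \cup \{o\}) - f(G)\bigr].
\]

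I would split $O \setminus G$ by whether $G \cup \{o\} \in \mathcal{F}$. If $G \cup \{o\} \in \mathcal{F}$, then greedy did not pick $o$ only because its marginal at $G$ is at most $0$, so such terms contribute at most $0$. Otherwise $G \cup \{o\} \notin \mathcal{F}$, and I define $\phi(o)$ to be the largest $j \in \{1, \ldots, k\}$ with $G_{j-1} \cup \{o\} \in \mathcal{F}$ (well-defined since $\{o\} \in \mathcal{F}$ and $G \cup \{o\} \notin \mathcal{F}$). Because $o$ was a feasible candidate at step $\phi(o)$, the greedy rule gives $f(G_{\phi(o)-1} \cup \{o\}) - f(G_{\phi(o)-1}) \le \Delta_{\phi(o)}$, and submodularity then yields $f(G \cup \{o\}) - f(G) \le \Delta_{\phi(o)}$. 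Writing $c_i := |\phi^{-1}(i)|$ I thus obtain $f(O \cup G) - f(G) \le \sum_{i=1}^k c_i \Delta_i$.

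The combinatorial core of the argument is the \emph{cumulative} bound $C_i := \sum_{j \le i} c_j \le \ell\, i$ for every $i$. Fix $i$ and let $B_i := \{o \in O \setminus G : G_i \cup \{o\} \notin \mathcal{F}\}$, so that $C_i = |B_i|$ since $\phi(o) \le i$ is equivalent to $o \in B_i$. Consider $Y := G_i \cup B_i$. Since $\mathcal{F}$ is downward closed, $G_i \in \mathcal{F}$; and by the very definition of $B_i$, no element of $B_i$ can be added to $G_i$ while staying in $\mathcal{F}$. Hence $G_i$ is a maximal independent subset of $Y$ of size $i$. On the other hand, $O \cap Y = (O \cap G_i) \cup B_i$ is $\mathcal{F}$-independent as a subset of $O$, so the $\ell$-system axiom applied to $Y$ forces $|O \cap Y| \le \ell\, |G_i| = \ell i$, and therefore $|B_i| \le \ell i$.

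To close the argument I would use that the marginals satisfy $\Delta_1 \ge \Delta_2 \ge \cdots \ge \Delta_k \ge 0$: at step $i$, $g_{i+1}$ was a feasible candidate (since $G_{i-1} \cup \{g_{i+1}\} \subseteq G_{i+1} \in \mathcal{F}$ and $\mathcal{F}$ is downward closed), so the marginal of $g_{i+1}$ at $G_{i-1}$ is at most $\Delta_i$, and submodularity makes its marginal at $G_i$ (which is $\Delta_{i+1}$) no larger. Setting $\Delta_{k+1} := 0$ and summing by parts,
\[
\sum_i c_i \Delta_i \;=\; \sum_i C_i (\Delta_i - \Delta_{i+1}) \;\le\; \sum_i \ell i (\Delta_i - \Delta_{i+1}) \;=\; \ell \sum_i \Delta_i \;=\; \ell\, f(G).
\]
Chaining the inequalities gives $f(O) \le f(O \cup G) \le (\ell+1) f(G)$. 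The main obstacle is precisely the bound $C_i \le \ell i$: a per-step bound $c_i \le \ell$ is tempting but does not follow from the $\ell$-system axiom, and the right way to use the axiom is through the cumulative bound, which then meshes with the non-increasing marginals via Abel summation to extract exactly $\ell f(G)$.
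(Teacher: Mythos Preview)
The paper does not give its own proof of this lemma: it is stated with citations to \cite{FNW2-Journal78,CCPV-SICOMP11} and used as a black box. Your argument is correct and is essentially the standard proof from those references---partition $O\setminus G$ according to the first greedy step at which each element becomes infeasible, use the $\ell$-system axiom on $Y=G_i\cup B_i$ to get the cumulative bound $C_i\le \ell i$, and combine with decreasing marginals via Abel summation. One minor remark: since $f$ is monotone, marginals are nonnegative and greedy terminates only when $G$ is maximal in $\mathcal{F}$, so the case $G\cup\{o\}\in\mathcal{F}$ is in fact vacuous; your handling of it is harmless. Also, $\sum_i \Delta_i = f(G)-f(\emptyset)$, so you are implicitly using $f(\emptyset)\ge 0$ (standard for this setting) in the last step.
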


Finally, we  need to show that given $S$ there exists an efficient adaptive strategy in the \POI world with expected utility $\E[\max_{i \in S} \{ \Ymax_i \}]$. But this is exactly the Pandora's box problem for which we know that Weitzman's index-based  policy is optimal with expected utility $\E[\max_{i \in S} \{ \Ymax_i \}]$. The above discussion can be summarized in the following theorem.

\begin{theorem}\label{thm:WeitzToNonAdap}
Given   a $\beta$-approximation oracle  for  monotone submodular maximization over downward-closed constraints $\J$,  there exists  a $3  \beta$-approximation algorithm for constrained utility-maximization.
\end{theorem}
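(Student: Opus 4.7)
The plan is to chain together the three building blocks already established in this section: the upper bound of Lemma~\ref{lem:boundOptConstrained} (\POI world $\le$ \FreeInfo adaptive), the adaptivity-gap bound of Lemma~\ref{lem:constrainedBoundAdapGap} (\FreeInfo adaptive $\le 3\cdot$ \FreeInfo non-adaptive), and the oracle of Assumption~\ref{assum:oracleNonAdaptive} (non-adaptive approximated within $\beta$). The algorithm is then the natural one: compute the surrogate distributions $\Ymax_i$ from the probing prices (Defn~\ref{defn:Y}), call the $\beta$-approximation oracle on the submodular objective $F(S) := \E[\max_{i\in S}\{\Ymax_i\}]$ over the downward-closed family $\J$ to obtain $\widehat S \in \J$, and then execute Weitzman's optimal Pandora's box policy on the box set $\widehat S$ in the \POI world.

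The analysis proceeds in four short steps. First, let $\OPT$ denote the expected utility of the optimal adaptive strategy for the constrained utility-maximization instance in the \POI world. By Lemma~\ref{lem:boundOptConstrained}, $\OPT$ is at most the value of the optimal adaptive stochastic probing strategy on the surrogates $\{\Ymax_i\}$ with constraints $\J$; call this quantity $\OPT_{\mathrm{adap}}$. Second, by Lemma~\ref{lem:constrainedBoundAdapGap}, $\OPT_{\mathrm{adap}} \le 3 \cdot \OPT_{\mathrm{non}}$, where $\OPT_{\mathrm{non}} := \max_{S\in \J} F(S)$. Third, since $F$ is a monotone submodular set function (as a max over independent random variables), the oracle guarantee gives $F(\widehat S) \ge \OPT_{\mathrm{non}}/\beta$.

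It remains to argue that running Weitzman's policy on $\widehat S$ in the \POI world yields expected utility at least $F(\widehat S)$. This is exactly the content of Weitzman's original theorem: for a fixed set of boxes with prices $\{\price_i\}_{i\in \widehat S}$, the index policy achieves expected utility $\E[\max_{i\in \widehat S}\{\Ymax_i\}] = F(\widehat S)$. Alternatively, one can derive this as the special case of Theorem~\ref{thm:frugalToAdaptive} where the packing family $\F$ is the rank-$1$ uniform matroid on $\widehat S$ and the value function is $\val(\I,\X)=\max_{i\in \I} X_i$ (the trivial select-the-best \frugal algorithm). Chaining the four inequalities,
\[
\text{ALG} \;\ge\; F(\widehat S) \;\ge\; \frac{\OPT_{\mathrm{non}}}{\beta} \;\ge\; \frac{\OPT_{\mathrm{adap}}}{3\beta} \;\ge\; \frac{\OPT}{3\beta},
\]
which is the desired $3\beta$-approximation.

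There is no real obstacle here beyond assembling the pieces; the one technical check worth flagging is verifying that $F(S)=\E[\max_{i\in S}\Ymax_i]$ is monotone and submodular in $S$ so that the oracle is applicable, but this is standard (the max of a fixed realization is monotone submodular, and expectation preserves both properties). The main corollary, Theorem~\ref{thm:constrainedUM}, then follows immediately by instantiating the oracle with the greedy algorithm of Lemma~\ref{lem:LSystemGreedy}, yielding $\beta = \ell+1$ for $\ell$-systems.
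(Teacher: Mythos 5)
Your proposal is correct and follows exactly the same route as the paper: bound the \POI optimum by the adaptive \FreeInfo stochastic probing value via Lemma~\ref{lem:boundOptConstrained}, apply the adaptivity gap of Lemma~\ref{lem:constrainedBoundAdapGap}, invoke the $\beta$-approximation oracle on the monotone submodular objective $S\mapsto \E[\max_{i\in S}\Ymax_i]$, and finish by running Weitzman's index policy on the returned set. The chain $\text{ALG}\geq F(\widehat S)\geq \OPT_{\mathrm{non}}/\beta \geq \OPT_{\mathrm{adap}}/(3\beta)\geq \OPT/(3\beta)$ is precisely the paper's argument, so no further commentary is needed.
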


Combining Lemma~\ref{lem:LSystemGreedy} and Theorem~\ref{thm:WeitzToNonAdap}, we get Theorem~\ref{thm:constrainedUM} as a corollary.


\ifFULL
\subsection{Bounding the Adaptivity Gap} \label{sec:constrainedBoundAdapGap}
In this section we prove Lemma~\ref{lem:constrainedBoundAdapGap} by generalizing a similar result for Bernoulli variables of Gupta et al.~\cite{GNS-SODA17} to functions that are given by weighted matroid rank function. We do this by reducing the problem for discrete random variables to a Bernoulli setting.

\begin{lemma} \label{lem:adapGapToBern}
 The adaptivity gap for the stochastic probing problem  
for discretely distributed non-negative random variables is bounded by that for Bernoulli distributed non-negative random variables.
\end{lemma}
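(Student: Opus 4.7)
The plan is to exhibit, for every stochastic probing instance with discretely distributed variables, a Bernoulli instance with the \emph{same} optimal adaptive value \emph{and} the same optimal non-adaptive value, so that the adaptivity gap is preserved exactly. Since the constructed instance lies within the class covered by the Bernoulli bound, the lemma will follow.

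Given variables $Y_1,\dots,Y_n$ with $Y_i$ supported on $0\le v_{i,1}<v_{i,2}<\cdots<v_{i,k_i}$ and probing constraints $\J\subseteq 2^{[n]}$, I would introduce, for each $i$, a \emph{block} $B_i=\{Z_{i,1},\dots,Z_{i,k_i}\}$ of mutually independent Bernoullis with $Z_{i,j}\in\{0,v_{i,j}\}$ and $\Pr[Z_{i,j}=v_{i,j}]=q_{i,j}$, where the $q_{i,j}$ are chosen by a telescoping argument so that $\max_j Z_{i,j}$ has the same distribution as $Y_i$; concretely $q_{i,j}=\Pr[Y_i=v_{i,j}]/\Pr[Y_i\le v_{i,j}]$ yields $\Pr[\max_j Z_{i,j}\le v_{i,j}]=\prod_{j'>j}(1-q_{i,j'})=\Pr[Y_i\le v_{i,j}]$. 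I would then lift $\J$ to constraints on the Bernoullis by declaring $\J':=\{\,S'\subseteq\bigcup_i B_i:\{i:S'\cap B_i\neq\emptyset\}\in\J\}$, which is downward-closed because $\J$ is.

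The key step is to verify that both optima transfer. For the non-adaptive direction, any $S'\in\J'$ can be extended to a union of whole blocks without leaving $\J'$ (the set of touched blocks is unchanged) and without decreasing $\E[\max_{Z\in S'} Z]$; hence the non-adaptive optimum for the Bernoulli instance equals $\max_{B\in\J}\E[\max_{i\in B} Y_i]$, which is the non-adaptive optimum for the discrete instance. For the adaptive direction, the same monotonicity lets me assume without loss of generality that an optimal adaptive Bernoulli strategy always probes a block to completion before moving on; after it probes $B_i$, the realized $\max_j Z_{i,j}$ has the distribution of $Y_i$, so the strategy is equivalent to an adaptive strategy on the discrete instance, and conversely any adaptive discrete strategy is trivially simulated block-by-block. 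These two equalities yield that the adaptivity gap of the discrete instance equals that of the constructed Bernoulli instance, which is at most the supremum adaptivity gap over all Bernoulli instances.

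The main point requiring care is the \emph{free-extension} argument: enlarging a probed set from a partial block to the full block must cost nothing in $\J'$. This relies precisely on the fact that $\J'$ is determined by which blocks are \emph{touched} rather than by which individual Bernoullis are probed, so that once any $Z_{i,j}$ is probed the remaining $Z_{i,j'}$ in $B_i$ are ``inside'' the already-paid block and can be added freely; I would state this as an auxiliary claim and use it twice, once for adaptive and once for non-adaptive strategies. A minor bookkeeping point is to check that the objective in both instances is the same $\max$ function (a rank-$1$ weighted matroid rank function), so no reweighting is needed and the bound on the Bernoulli adaptivity gap applies verbatim to the constructed instance.
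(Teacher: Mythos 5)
Your proposal is correct and is essentially the same decomposition idea as the paper's: replace each discrete variable by a collection of independent Bernoullis so that one discrete variable corresponds to one ``block'' of Bernoullis, and lift the probing constraint to the Bernoulli side. The paper phrases the construction as a decision-tree surgery: each node of the optimal discrete decision tree (probing $Y_i$) is replaced by a chain of binary nodes where node $j$ is active with probability $p_j/(1-\sum_{k<j}p_k)$ (implicitly, the values are processed in decreasing order so that the first active Bernoulli realizes the max), and the resulting binary tree is declared to be a strategy for an equivalent Bernoulli instance; your parallel-block construction with $q_{i,j}=\Pr[Y_i=v_{i,j}]/\Pr[Y_i\le v_{i,j}]$ is the same telescoping identity read in increasing order, so the two constructions are formally interchangeable. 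Where you go a bit further than the paper's terse argument is in (a) explicitly defining the lifted constraint family $\J'=\{S': \{i: S'\cap B_i\neq\emptyset\}\in\J\}$ and noting it is downward-closed, and (b) verifying that \emph{both} the adaptive and non-adaptive optima transfer exactly, via the free-extension observation that completing a touched block to the full block is free in $\J'$ and cannot decrease $\E[\max]$. The paper's proof only spells out the adaptive-side tree transformation and asserts ``same adaptivity gap,'' leaving the non-adaptive side and the constraint lift implicit; your write-up is the more self-contained version of the same argument. One cosmetic point worth keeping: your reduction of the Bernoulli-side adaptive strategy to a block-level (hence discrete) strategy tacitly uses that, conditioned on the history, the unprobed blocks are independent of everything observed, so the extra per-element information inside a probed block beyond $\max_j Z_{i,j}$ cannot help future decisions; this is correct but worth stating explicitly, since it is the only place where block independence across $i$ is actually used.
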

\begin{proof}
Let us assume that each random variable $Y_i$ takes value in a discrete set $\{v_1, v_2, \ldots, v_m \}$, where it takes value $v_j$ w.p. $p_j$ and $\sum_j p_j = 1$. (Note that $p_j$ is a function of $i$ but we don't write index $i$ for ease of exposition.) Consider the optimal adaptive strategy decision tree $\T$ for the stochastic probing problem; here each node has at most $m$ children. We modify $\T$ to obtain a binary decision tree $\T'$, which shows one can transform the instance to an instance with Bernoulli random variables and the same adaptivity gap. The idea is to replace every node $i$ in $\T$ with $m$ binary decision variables in $\T'$, where variable $j$ is active w.p. $\frac{p_j}{1- \sum_{k<j}p_k}$. If active, variable $j$ leads to the subtree corresponding to the case when $Y_i$ take value $v_j$ in $\T$. The two trees are equivalent because the probability that variable $Y_i$ takes value $v_j$ in $\T'$ is exactly 
\[ \prod_{j'<j} \left(1- \frac{p_{j'}}{1- \sum_{k<j'}p_k} \right) \cdot \frac{p_j}{1- \sum_{k<j}p_k} \quad = \quad \prod_{j'<j} \left( \frac{1- \sum_{k\leq j'}p_k}{1- \sum_{k<j'}p_k} \right) \cdot \frac{p_j}{1- \sum_{k<j}p_k} \quad = \quad p_j .
\]
\end{proof}

To finish the proof of Lemma~\ref{lem:constrainedBoundAdapGap}, we combine Lemma~\ref{lem:adapGapToBern} with the following result of Gupta et al.
\begin{lemma}[\cite{GNS-SODA17}]\label{lem:GNSadapGap} 
The adaptivity gap for the stochastic probing problem for  Bernoulli random variables over any given downward-closed constraints is at most $3$.
\end{lemma}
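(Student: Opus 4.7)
The goal is to produce, from any adaptive probing policy $\T$ on Bernoulli random variables $Y_i \in \{0, v_i\}$ (with $\Pr[Y_i = v_i] = p_i$), a non-adaptive policy $\pi$ with $V(\pi) \geq \tfrac{1}{3}\, V(\T)$. Fix an optimal adaptive $\T$ and let $\mu$ be the induced distribution on $\J$ of its (random) probed set $\probed(\T)$. Set $q_i = \Pr_\T[i \in \probed(\T)] = \E_{S\sim\mu}[\mathbf{1}_{i\in S}]$. Because $\J$ is downward-closed and $\probed(\T) \in \J$ almost surely, the marginal vector $q$ lies in $\mathrm{conv}\{\mathbf{1}_S : S \in \J\}$, and I take as candidate $\pi$: sample $S \sim \mu$ \emph{obliviously of the Bernoulli outcomes} and probe all of $S$. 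This $\pi$ is feasible since $S \in \J$ a.s.

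Next I compare values layer by layer using the layer-cake decomposition. For $t \geq 0$, write $A_t = \{i : v_i \geq t\}$ and $\alpha_t(S) := \sum_{i \in S \cap A_t} p_i$. Then
\[
V(\T) = \int_0^\infty \Pr\bigl[\exists i \in \probed(\T)\cap A_t:\, Y_i = v_i\bigr]\, dt, \qquad V(\pi) = \int_0^\infty \E_{S\sim\mu}\bigl[1 - \textstyle\prod_{i \in S \cap A_t}(1-p_i)\bigr]\, dt.
\]
A union bound gives the adaptive upper estimate $\Pr[\exists i \in \probed(\T)\cap A_t : Y_i = v_i] \leq \min\bigl(1,\, \E_{S\sim\mu}[\alpha_t(S)]\bigr)$, while the elementary bound $1 - \prod(1 - p_i) \geq \tfrac{1}{2}\min(1,\sum p_i)$ yields the non-adaptive lower estimate $\E_{S\sim\mu}[1 - \prod_{i \in S \cap A_t}(1-p_i)] \geq \tfrac{1}{2} \E_{S\sim\mu}[\min(1, \alpha_t(S))]$. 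So it suffices to prove the per-layer \emph{contention inequality}
\[
\E_{S\sim\mu}\bigl[\min(1, \alpha_t(S))\bigr] \;\geq\; \tfrac{2}{3}\,\min\bigl(1,\, \E_{S\sim\mu}[\alpha_t(S)]\bigr).
\]

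The hard part will be this contention inequality: since $\min(1,\cdot)$ is concave, Jensen's inequality goes the wrong way, and for an \emph{arbitrary} distribution on feasible sets with marginals $q$ the ratio can be made arbitrarily small (concentrating mass on sets $S$ with $\alpha_t(S) \gg 1$ while leaving other realizations empty). To overcome this, I intend to exploit that $\mu$ was \emph{not} arbitrary but arose from an adaptive policy. Specifically, for each threshold $t$ I would truncate $\T$ into a ``layer-$t$ stopping'' version $\T^{(t)}$ that halts as soon as it has observed some $Y_i = v_i$ with $i \in A_t$; by downward-closedness of $\J$, the probed set of $\T^{(t)}$ still lies in $\J$, and its contribution at layer $t$ equals that of $\T$. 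Under this truncation, the probed set of $\T^{(t)}$ cannot continue to accrue mass in $A_t$ after a first success, so one obtains a stochastic ``parsimony'' bound on $\alpha_t(\probed(\T^{(t)}))$ that allows one to trade the concavity loss for a bounded constant.

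Finally, assembling the factors: the layerwise $\tfrac{1}{2}$ from the Bernoulli-max bound combined with the $\tfrac{2}{3}$ contention bound gives $V(\pi) \geq \tfrac{1}{3}\, V(\T)$ after integrating over $t$. I expect the main technical work to be making the layer-stopping decomposition and its feasibility argument precise, and verifying that the parsimonious version of $\T$ still witnesses the original value on each layer. Once that is done, the constant $3$ falls out by combining the two elementary per-layer estimates.
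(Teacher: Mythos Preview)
The paper does not prove this lemma at all; it is simply quoted as a result of Gupta--Nagarajan--Singla~\cite{GNS-SODA17}, so there is no ``paper's own proof'' to compare against. I therefore assess your proposal on its own merits and against what the cited proof actually does.

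Your layer-cake decomposition and the choice of non-adaptive strategy (sample the probed set of $\T$ with fresh coins) are both natural and match what \cite{GNS-SODA17} uses. The gap is in the per-layer accounting. The contention inequality you need,
\[
\E_{S\sim\mu}\bigl[\min(1,\alpha_t(S))\bigr] \;\geq\; \tfrac{2}{3}\,\min\bigl(1,\E_{S\sim\mu}[\alpha_t(S)]\bigr),
\]
is \emph{false} even after your truncation idea. After truncating $\T$ to $\T^{(t)}$ (stop at the first success in $A_t$), one indeed has $\E_{\mu^{(t)}}[\alpha_t]=\Pr[\text{success}]\le 1$ and, by the coupling $\probed(\T^{(t)})\subseteq\probed(\T)$, the left side only improves. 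But now consider $\T^{(t)}$ that probes i.i.d.\ items in $A_t$ with $p_i=\varepsilon\to 0$ until the first success: then $\alpha_t(\probed)=\varepsilon N$ with $N$ geometric, so $\alpha_t\Rightarrow\mathrm{Exp}(1)$, and
\[
\frac{\E[\min(1,\alpha_t)]}{\E[\alpha_t]}\;\longrightarrow\;\frac{\E[\min(1,W)]}{\E[W]}\;=\;1-\tfrac{1}{e}\;\approx\;0.632\;<\;\tfrac{2}{3},\qquad W\sim\mathrm{Exp}(1).
\]
So your product $(1/2)\cdot(2/3)=1/3$ cannot be obtained from these two estimates. One can tighten the first step to $1-\prod(1-p_i)\ge(1-1/e)\min(1,\sum p_i)$, but then you must \emph{prove} that $1-1/e$ is the worst case of the second ratio over \emph{all} truncated adaptive strategies, and your sketch does not do this. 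More fundamentally, after the layer-cake reduction and truncation, the per-layer question ``non-adaptive $\mu^{(t)}$ vs.\ adaptive $\T^{(t)}$ for the event $\{\exists\,i\in A_t:\,Y_i=v_i\}$'' is itself an adaptivity-gap question (for the special submodular function ``at least one success''), so the argument is close to circular unless you supply a direct bound there.

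For contrast, the proof in \cite{GNS-SODA17} does not go through a layer-cake or a contention-type inequality. It works for any monotone submodular objective (not just $\max$) and proceeds by an induction on the decision tree of $\T$: at each node it compares the marginal gain the adaptive policy obtains from probing the current item with the marginal gain the random-path non-adaptive policy obtains, using submodularity to control the interaction between the two subtrees. The factor $3$ falls out of that recursion, not from a product of two separate per-layer constants.
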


\IGNORE{
\begin{lemma} The Bernoulli adaptivity gap for stochastic probing problem over any downward-closed set system for weighted rank function of an $\ell$-system is at most $\ell+2$.
\end{lemma}
The proof is similar to the one presented in~\cite{GNS-SODA17} for non-monotone submodular functions. We defer it to the full version of the paper.
}
\else
\fi

\subsection{An Application to the Set-Probing Utility-Maximization Problem}\label{section:setProbing}
In this section we see an application of the constrained utility-maximization framework to the \emph{set-probing utility-maximization} problem  defined in Section~\ref{sec:introConstrainedUtilMax}. This problem is a generalization of Pandora's box  where we pay a price to simultaneously find values of a set of random variables. 
\ifFULL We restate Theorem~\ref{thm:setprobing} for convenience.
\setProbing*
\noindent  The remaining section discusses the approximation algorithm. See the hardness proof in  Section~\ref{lem:hardnessNonDisj}. 
\else 
We next discuss the approximation algorithm of Theorem~\ref{thm:setprobing} (hardness result in the full version is by reducing to $\ell$-set packing). 
\fi 

We first observe that WLOG one can assume  that the given sets $\S = \{S_1, \ldots, S_m\}$ are downward-closed, i.e., if $S \in \S $ then any subset $T\subseteq S$ is also in $\S$. This is because a simple way to ensure downward-closedness is by adding every subset of $S_j$  for the same price $\price_j$ into $\S$. Intuitively, this is equivalent to paying for the original set but choosing not to see the outcome of some of the random variables in it.

To construct our algorithm, we imagine solving a constrained utility-maximization problem. The
random variables of this problem are indexed by sets $S\in \S$: variable  $X_{S}$ has value $\max_{i\in S} \{X_i\}$ and has price $\price_S$. The problem is to adaptively probe some elements such that the sets corresponding to them are pairwise disjoint (a downward-closed constraint $\J$), while the goal is to maximize the utility that is given by max element value minus the total probing prices. Intuitively, the reason we need disjointness is to ensure independence between sets in our analysis as  disjoint sets of random variables take values independently. We  make the following simple observation.

\begin{observation}
The optimal adaptive policy for this  constrained utility-maximization problem with disjointness constraints  is the same as the  unconstrained set-probing utility-maximization problem.
\end{observation}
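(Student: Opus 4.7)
The plan is to prove the two directions of equality separately. The easy direction is that any feasible strategy for the constrained (disjoint-sets) problem is, by definition, a feasible strategy for the unconstrained set-probing problem, so $\OPT_{\text{constrained}} \leq \OPT_{\text{unconstrained}}$. The nontrivial direction is to show that any adaptive strategy for the unconstrained problem can be transformed into a strategy for the disjoint-constrained problem whose expected utility is at least as large.

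For that direction, the plan is to use the downward-closedness of $\S$ (established by the WLOG assumption immediately preceding the observation) to ``trim'' each probed set. Fix any adaptive decision tree $\T$ for the unconstrained problem. I would construct a tree $\T'$ by traversing $\T$ from the root: whenever, along the current root-to-node path in $\T'$, $\T$ prescribes probing a set $S \in \S$, let $U$ denote the union of sets already probed on that path in $\T'$, and instead probe $S \setminus U$. By downward-closedness, $S\setminus U \in \S$, and the construction of the downward-closure (every subset inherits the price of a containing original set) gives $\price_{S\setminus U} \leq \price_S$. By construction the sets probed on any root-to-leaf path of $\T'$ are pairwise disjoint, so $\T'$ is feasible for $\J$.

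Next I would argue equivalence of information and value along corresponding paths. The key observation is that probing $S\setminus U$ in $\T'$ reveals exactly the $X_i$ for $i \in S \setminus U$; the remaining values $X_i$ for $i \in S \cap U$ were already learned earlier in $\T'$ (since $U$ was probed). Therefore the joint information available after this step in $\T'$ coincides with the joint information available after probing $S$ at the analogous step in $\T$. This means the branching in $\T$ on the outcomes of elements of $S$ can be faithfully simulated in $\T'$, so $\T'$ is well-defined. Moreover, the set of probed elements along any realization path is identical in $\T$ and $\T'$, giving the same $\max_{i \text{ probed}} \{X_i\}$; and the total price in $\T'$ is at most that in $\T$ summand-by-summand. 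Taking expectations gives $\E[\text{util}(\T')] \geq \E[\text{util}(\T)]$, completing the reverse inequality.

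The main subtlety — and the only real obstacle — is spelling out precisely how the decision tree $\T'$ simulates $\T$ when the information available is \emph{identical} but is obtained via different probing steps; this requires being careful that $U$ is well-defined along every root-to-node path of $\T'$ (it is, by induction on depth) and that $\T$'s branching queries only involve already-revealed random variables (which holds by definition of an adaptive policy). Once this bookkeeping is in place, the utility comparison is immediate from monotonicity of $\pi$ under subset inclusion on the downward-closed $\S$.
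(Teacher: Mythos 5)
The paper states this as a ``simple observation'' and supplies no proof at all, so there is nothing to compare against directly; you are filling in a gap the author left implicit. Your overall plan --- trivial inclusion for one direction, and a trimming transformation $S \mapsto S \setminus U$ for the other, justified by the WLOG downward-closure of $\S$ established just above the observation --- is the right idea, and most of the bookkeeping is sound: the trimmed sets along any root-leaf path are pairwise disjoint by construction, the union of probed ground elements is preserved (so the maximum term is unchanged), and $\price_{S\setminus U}\leq\price_S$ because the downward-closure assigns every subset the minimum price over original sets containing it.

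There is one step that is not quite right as written. In the paper's constrained utility-maximization reformulation, the ``elements'' being probed are the sets $S\in\S$ themselves, each a single super-element with value $X_S=\max_{i\in S}X_i$ and price $\price_S$; probing such an element reveals $X_S$, not the individual $X_i$ for $i\in S$. Your proof asserts that the information available in $\T'$ after probing $S\setminus U$ ``coincides with'' the information available in $\T$ after probing $S$; under the super-element semantics that is false --- $\T$ learns every $X_i$ with $i\in S$, whereas $\T'$ learns only $\max_{i\in S\setminus U}X_i$ together with the maxima of the earlier disjoint pieces. The fix is short but should be stated: for the objective $\max_i X_i - \sum\price$, the optimal set-probing policy may WLOG condition only on the pair (set of probed ground elements, current running maximum), because the unprobed $X_i$ are independent of the probed ones and enter the objective only through a max against the running maximum. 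Both coordinates of that state are computable in $\T'$ from the $X_{S'}$ values of the disjoint probed pieces ($U=\bigcup S'$ and the running max is $\max_{S'}X_{S'}$), so $\T'$ simulates $\T$ even with the coarser information. With that sentence added, the argument closes.
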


Given the above observation and noting that  disjointness constraints are downward-closed, we want to use Theorem~\ref{thm:WeitzToNonAdap} to reduce our problem into a non-adaptive optimization problem. Although it appears that this is not possible because Theorem~\ref{thm:WeitzToNonAdap}  is only for independent variables, and variables corresponding to non-disjoint sets are not independent. Fortunately, the proof of Theorem~\ref{thm:WeitzToNonAdap} only uses independence of random variables along any root-leaf path of the decision tree. Since our probing constraints ensure that the probed sets are  disjoint, we  get  variables $X_S$ along any root-leaf path  to be independent, thereby allowing us to use Theorem~\ref{thm:WeitzToNonAdap}. The final  part in the proof of Theorem~\ref{thm:setprobing}  is an approximation  algorithm for this  non-adaptive constrained utility-maximization problem. 

\begin{lemma}\label{lemma:nonAdapSetProbing}
There exists an efficient $(\ell+1)$-approximation  algorithm for the non-adaptive problem of finding a family $\S' \subseteq \S$ of disjoint sets   to maximize  	$E[\max_{S\in \S'} \{\Ymax_S \}]$.
\end{lemma}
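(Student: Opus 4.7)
The plan is to recognize the non-adaptive problem as monotone submodular maximization subject to an $\ell$-system constraint, and then invoke the greedy guarantee of Lemma~\ref{lem:LSystemGreedy} to obtain the $(\ell+1)$-approximation.

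First I would verify that the objective $f(\S') := \E[\max_{S \in \S'} \{\Ymax_S\}]$, with $f(\emptyset)=0$, is a monotone submodular set function on $2^{\S}$. Monotonicity is immediate because the pointwise maximum is monotone in the set of indices. For submodularity, fix any realization of the underlying randomness; then for any $A \subseteq B \subseteq \S$ and any $S \notin B$,
\[
	\left(\Ymax_S - \max_{T \in A} \Ymax_T\right)^{+} \;\geq\; \left(\Ymax_S - \max_{T \in B} \Ymax_T\right)^{+}
\]
pointwise, and taking expectations yields $f(A \cup \{S\}) - f(A) \geq f(B \cup \{S\}) - f(B)$. Crucially, this argument uses only nonnegativity of the $\Ymax_i$'s and not their independence, so the potential correlation between $\Ymax_S$ and $\Ymax_{S'}$ when $S \cap S' \neq \emptyset$ causes no difficulty.

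Second, I would verify that the disjointness family
\[
	\J \;:=\; \{\S' \subseteq \S \,:\, T \cap T' = \emptyset \text{ for all distinct } T, T' \in \S'\}
\]
forms an $\ell$-system. Fix any $Y \subseteq \S$ and let $A, B$ be two maximal members of $\J$ contained in $Y$. By maximality of $B$, every $S \in A$ must intersect some $T \in B$, since otherwise $B \cup \{S\}$ would remain pairwise disjoint, contradicting maximality. Charge each $S \in A$ to one such $T \in B$ that it meets; because the members of $A$ are themselves pairwise disjoint, each element of $T$ lies in at most one $S \in A$, and hence $T$ receives at most $|T| \leq \ell$ charges. Thus $|A| \leq \ell\,|B|$, and by symmetry $|B| \leq \ell\,|A|$, establishing the $\ell$-system property.

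Combining these two ingredients, Lemma~\ref{lem:LSystemGreedy} immediately yields the claimed $(\ell+1)$-approximation via the standard greedy algorithm that iteratively picks the set $S \in \S$ maximizing the marginal gain $f(M \cup \{S\}) - f(M)$ subject to $M \cup \{S\} \in \J$. Each marginal is simply $\E[(\Ymax_S - \max_{T \in M} \Ymax_T)^{+}]$, which is efficiently computable whenever each $\Ymax_i$ has polynomial support (and by standard sampling in general). The only real technical content is the $\ell$-system verification; the submodularity check and the efficiency of the greedy marginal computation are routine.
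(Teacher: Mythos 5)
Your core reduction is the same as the paper's: observe that $f(\S') = \E[\max_{S \in \S'} \Ymax_S]$ is monotone submodular, that the disjointness family $\J$ is an $\ell$-system because each set has size at most $\ell$, and then invoke Lemma~\ref{lem:LSystemGreedy}. Your pointwise argument for submodularity and your charging argument for the $\ell$-system property are both correct and in fact spell out steps the paper merely asserts (one minor patch: for $S \in A \cap B$ the maximality argument is vacuous, but such an $S$ can simply be charged to itself).

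However, you have a genuine gap on the \emph{efficiency} part of the claim, which the paper addresses explicitly and you do not. Recall that just before this lemma the paper passes WLOG to the \emph{downward-closed} family $\S$: every subset of each original $S_j$ is added at price $\price_j$. This is needed so that the constrained-probing reformulation (disjointness constraint $\J$) has the same optimum as the original set-probing problem, but it also makes $|\S|$ as large as $\sum_j 2^{|S_j|}$, which is exponential when $\ell = \omega(\log n)$. Your statement ``the greedy algorithm that iteratively picks the set $S \in \S$ maximizing the marginal gain'' therefore does not describe a polynomial-time procedure as written: the issue is not computing a single marginal $\E[(\Ymax_S - \max_{T\in M}\Ymax_T)^+]$, which you do address, but enumerating the exponentially many candidates to find the argmax. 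The paper closes this gap by observing that for $A \subseteq B$ offered at the same price the marginal of $B$ dominates that of $A$ (since $\Ymax_B \geq \Ymax_A$ pointwise), so greedy may restrict attention to the $m$ original sets; you need some such argument to justify the word ``efficient'' in the lemma.
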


\ifFULL
\begin{proof}
 Observe that the  function $g(\S')=E[\max_{S\in \S'} \{\Ymax_S\}]$ is submodular. Also, the disjointness constraints can be viewed as an $\ell$-system constraints since each set $S$ has size at most $\ell$. Thus we can view the non-adaptive problem as maximizing a submodular function over an $\ell$-system, where we know by Lemma~\ref{lem:LSystemGreedy} that the greedy algorithm has an $(\ell+1)$-approximation.

Moreover, to implement the greedy algorithm efficiently, we note that although $\S$ may contain an exponential number of elements, the initial  set system $\S$ was polynomial sized (before we made $\S$ downward-closed). For sets $A,B$ available at the same price, where $A\subseteq B$, it is obvious that the greedy algorithm will always choose $B$ before $A$. Hence, at every step our greedy algorithm only needs to consider the original sets, which are only polynomial in number, and select the set with the best marginal value. Since, this can be  done in polynomial time, this completes the proof of Theorem~\ref{thm:setprobing}.
\end{proof}
\else
\fi

\ifFULL \else \noindent See proof of Lemma~\ref{lemma:nonAdapSetProbing} in the full version.\fi


\medskip
\noindent
\paragraph{Acknowledgments}
We thank Anupam Gupta, Guru Prashanth,  Bobby Kleinberg, and Akanksha Bansal for helpful comments on improving the presentation of this paper.


\bibliographystyle{alpha}
\bibliography{bib}

\appendix

\ifFULL

\section{Illustrative Examples} \label{sec:examples}

\subsection{Why the na\"{\i}ve greedy algorithm fails for Pandora's box} \label{sec:egNaiveGreedy}

Suppose $curr$ denotes the maximum value in the currently opened set of boxes.
The na\"{\i}ve greedy  algorithm selects in any step the unopened box $j$ corresponding to the maximum marginal value, i.e. $\argmax\{ \E[(X_j - curr)^+] - \price_j\} $, and opens it if its marginal value is non-negative. The algorithm stops probing when every unopened box has a negative marginal value.  We give an example where this algorithm can be made arbitrarily worse as compared to the optimal algorithm.

Consider  $n-1$ iid boxes, each taking value $1/p^2$ w.p. $p$ and $0$ otherwise, where $p<1$. The probing price of each of these boxes is $1$. Also, there is a box which takes value $1/p^2$ w.p. $1$ but has a probing price of $1/p^2- 1/p +1$. Note that in the beginning, the marginal value of every box is $1/p-1$. Now the optimal strategy is to probe the boxes with price $1$, until we see a box with value $1/p^2$. For large enough $n$, the expected utility of this strategy is $\approx 1/p^2 - 1/p$ because in expectation the algorithm stops after roughly $1/p$ probes. However, the na\"{\i}ve greedy algorithm will open the box with  price $1/p^2- 1/p-1$ and then stop probing. Thus its expected utility will be $1/p-1$. By choosing small enough $p$, the ratio between $1/p^2 - 1/p$ and $1/p-1$ can be made arbitrarily large.

\subsection{The Pandora's box problem has no constant approximation non-adaptive solution} \label{sec:adapGapHardness}
Consider an example where each element independently takes value $1$ w.p. $p$ ($\ll 1$) and value $0$, otherwise. Suppose the price of probing any element is $1-\epsilon$, for some small $\epsilon>0$. The optimal adaptive strategy is to continue probing till we see an element with value $1$. Assuming $n$ to be large, it is easy to see that this strategy has expected value $\approx \epsilon/p$. On the other hand, a non-adaptive strategy has to decide in the beginning which all elements $S$ to probe, and then probe them irrespective of the consequence. Since all items are identical, the only decision it has to make is how many items to probe. One can verify that no such non-adaptive strategy can get value more than $O(\epsilon)$. By choosing $p$ to be small enough, we can make the gap arbitrarily large.

\subsection{Hardness for general submodular functions} \label{sec:submodHardness}
To prove that one cannot obtain good approximation results for any monotone submodular  functions $f$, we show that even when all variables are deterministic, the computational problem of selecting the best set $\I \subseteq V$ to maximize $f(\I) - \price(\I)$ is  $\tilde{\Omega}(\sqrt{n})$ hard assuming $P\neq NP$, where $n=|V|$. The idea is to reduce from set packing. Let $\S=\{S_1, S_2, \ldots, S_m\}$ denote the sets of a set packing instance. For $S\subseteq \S$, let $f(S)$ denote the covering function. Let price of probing $S_i$ be $\price_i = |S_i|-1$. Clearly, it doesn't make sense to probe sets that are not disjoint as the marginal utility will be non-positive. The optimal solution therefore equals the maximum number of disjoint sets. But no polynomial time algorithm can be $O({n^{1/2-\epsilon}})$-approximation, for any $\epsilon>0$, unless $P=NP$~\cite{Hastad-FOCS96,HMJJ-DAM00}.

\subsection{Hardness for the set-probing problem} \label{lem:hardnessNonDisj}
To prove that no polynomial time algorithm for the set-probing problem can be  $o(\ell/\log \ell)$-approximation, unless $P=NP$, we reduce  $\ell$-set packing problem into an instance of the set-probing problem.
Given an $\ell$-set packing instance with sets $S_1, S_2, \ldots, S_m$, each of size $\ell$, we create the following set-probing problem. For every element $i \in \bigcup_j S_j$, w.p.  $\frac{1}{n^3}$ variable $X_i$ takes value $1$, and is $0$, otherwise. Also, let each set $S_j$ have price $\frac{(\ell-0.5)}{n^3}$. 

Since probability of two elements taking value $1$ is really small $O(1/n^4)$, we  see that it only makes sense to probe sets where none of the elements have been already probed: even if a single element is probed before, expected value from probing the set is at most $\frac{(\ell-1)}{n^3} - \frac{(\ell-0.5)}{n^3} = \frac{-0.5}{n^3} < 0$.
Hence, $E[Opt] = (\text{Max \# disjoint sets})\cdot \frac{0.5}{n^3}$.
But this is exactly the $\ell$-set packing problem and we know that unless $P=NP$, no poly time algorithm can be $o(\ell/\log \ell)$-approximation~\cite{HSS-CC06}.

\else
\fi

\end{document}